\newcolumntype{C}{>{\centering\arraybackslash}X} 
\acrodef{ofdm}[OFDM]{Orthogonal Frequency-Division Multiplexing}
\acrodef{fft}[FFT]{fast Fourier transform}
\acrodef{clt}[CLT]{Central Limit Theorem}
\acrodef{uwb}[UWB]{ultra wideband}
\acrodef{iot}[IoT]{Internet of Things}
\acrodef{mimo}[MIMO]{multiple-input multiple-output}
\acrodef{isi}[ISI]{inter symbol interference}
\acrodef{cp}[CP]{cyclic prefix}
\acrodef{zp}[ZP]{zero-padded}
\acrodef{fir}[FIR]{finite impulse response}
\acrodef{v2x}[V2X]{Vehicle-to-everything}
\acrodef{nda}[NDA]{non-data-aided}
\acrodef{da}[DA]{data-aided}
\acrodef{ml}[ML]{maximum likelihood}
\acrodef{to}[TO]{timing offset}
\acrodef{wed}[WED]{weighted energy detector}
\acrodef{ed}[ED]{energy detector}
\acrodef{rms}[RMS]{root mean square}
\acrodef{ed}[ED]{energy detector}
\acrodef{tm}[TE]{transition estimator}
\acrodef{ge}[GE]{Gamma estimator}
\acrodef{pdf}[PDF]{probability density function}
\acrodef{cfo}[CFO]{carrier frequency offset}
\acrodef{iid}[i.i.d]{independent and identically distributed}
\acrodef{bem}[BEM]{basis expansion model}
\acrodef{ls}[LS]{least squares}
\acrodef{mmse}[MMSE]{minimum mean square error}
\acrodef{pa}[PA]{pilot-aided}
\acrodef{dd}[DD]{decision-directed}
\acrodef{cc}[CE]{channel estimation}
\acrodef{dnn}[DNN]{deep neural network}
\acrodef{mse}[MSE]{mean-squared error}
\acrodef{dl}[DL]{deep learning}
\acrodef{ci}[CI]{channel state information}
\acrodef{mmse}[MMSE]{minimum mean square error}
\acrodef{awgn}[AWGN]{additive white Gaussian noise}
\acrodef{map}[MAP]{maximum a posteriori probability}
\acrodef{ber}[BER]{bit error rate}
\acrodef{kf}[KF]{Kalman filter}
\acrodef{snr}[SNR]{signal-to-noise ratio}
\acrodef{iot}[IoT]{Internet of Things}
\acrodef{chf}[CHF]{characteristic function}
\newcommand{\bd}{\begin{description}}
\newcommand{\ed}{\end{description}}
\newcommand{\be}{\begin{enumerate}}
\newcommand{\ee}{\end{enumerate}}
\newcommand{\bi}{\begin{itemize}}
\newcommand{\ei}{\end{itemize}}
\newcommand{\bl}{\begin{list}}
\newcommand{\el}{\end{list}}
\newcommand{\bt}{\begin{tabbing}}
\newcommand{\et}{\end{tabbing}}
\newtheorem{theorem}{Theorem}
\newtheorem{corollary}{Corollary}
\let\old@ps@headings\ps@headings
\let\old@ps@IEEEtitlepagestyle\ps@IEEEtitlepagestyle
\def\confheader#1{%
  \def\ps@headings{%
    \old@ps@headings%
    \def\@oddhead{\strut\hfill#1\hfill\strut}%
    \def\@evenhead{\strut\hfill#1\hfill\strut}%
  }%
  \def\ps@IEEEtitlepagestyle{%
    \old@ps@IEEEtitlepagestyle%
    \def\@oddhead{\strut\hfill#1\hfill\strut}%
    \def\@evenhead{\strut\hfill#1\hfill\strut}%
  }%
  \ps@headings%
}
\begin{document}
\title{Maximum Likelihood Time Synchronization for Zero-padded OFDM}
\author{
   Koosha~Pourtahmasi~Roshandeh,~\IEEEmembership{Student~Member,~IEEE},
   Mostafa~Mohammadkarimi,~\IEEEmembership{Member,~IEEE}, and
    Masoud~Ardakani,~\IEEEmembership{Senior~Member,~IEEE
    }
  \thanks{
        K.~P.~Roshandeh and M.~Ardakani are with the Faculty of Electrical and Computer Engineering, University of Alberta, Edmonton, AB T6G 2R3, Canada (e-mail: \texttt{pourtahm@ualberta.ca};   \texttt{ardakani@ualberta.ca)}.
        
         M.~Mohammadkarimi is with the
        Department of Electrical and Computer Engineering,
        The University of British Columbia, Vancouver, BC, V6T 1Z4, Canada,
        (e-mail: \texttt{mmkarimi@ece.ubc.ca})
	}
}

\maketitle

\begin{abstract}
Existing \ac{ofdm} variants based on cyclic
prefix (CP) allow for efficient time synchronization, but  suffer from lower power efficiency compared to \ac{zp}-\ac{ofdm}.  Because of its power efficiency, \ac{zp}-\ac{ofdm} is considered as an appealing solution for the emerging low-power
wireless systems. However, in the absence of CP, time synchronization in ZP-OFDM is a very challenging task. In
this paper, the non-data-aided (NDA) maximum-likelihood (ML)
time synchronization for ZP-OFDM is analytically derived. We
show that the optimal NDA-ML synchronization algorithm offers a 
high lock-in probability and can be efficiently implemented using  Monte Carlo sampling (MCS) technique in combination with
golden-section search.  To obtain the optimal NDA-ML time synchronization algorithm,
we first derive a closed-form expression for the joint probability
density function (PDF) of the received ZP-OFDM samples in
frequency-selective fading channels. The derived expression is
valid for doubly-selective fading channels with mobile users as
well. The performance of the proposed synchronization
algorithm is evaluated under various practical settings through
simulation experiments. It is shown that the proposed optimal 
NDA-ML synchronization algorithm and its MCS implementation substantially
outperforms existing algorithms in terms of lock-in probability.

\end{abstract}

\begin{IEEEkeywords}
Time-synchronization, \ac{zp}-\ac{ofdm}, timing offset (TO), non-data-aided, maximum-likelihood (ML), 
Monte  Carlo  sampling.
\end{IEEEkeywords}
\acresetall	
\section{Introduction}
\IEEEPARstart{O}{}rthogonal  frequency-division multiplexing (OFDM) modulation is a widely used technique for  transmission over mobile wireless channels since it 
offers high spectral efficiency whilst providing resilience to frequency-selective fading \cite{farhang2016ofdm}.
One of the key requirements for optimum demodulation of OFDM signals  is accurate time  synchronization
because a small synchronization error can dramatically degrade the system performance. Hence, a variety of time synchronization methods
have been developed for OFDM systems \cite{zhang2019fine,abdzadeh2019timing, zhang2015maximum, mohebbi2014novel,morelli2007synchronization,  park2004blind}. These methods typically consist of two tasks: 1) offset estimation, and 2) offset correction. 
The former task relies on statistical signal processing algorithms to obtain an estimation of \ac{to} incurred due to lack of common time reference between the transmitter and receiver \cite{lin2018analysis,ziamaxli2018, abdzadeh2016improved}. 
The latter task is a simple compensation of \ac{to} by shifting \cite{morelli2007synchronization}.

Time synchronization for OFDM can be performed using either synchronization-assisting  signals, such as pilot signals and synchronization symbols \cite{gul2014timing}, or exploiting some redundant information in the transmitted signal, such as the guard interval redundancy employed to combat the \ac{isi} in frequency-selective
fading channels. The former approach is  \ac{da}, and the latter is \ac{nda} time synchronization \cite{nasir2016timing}. \ac{da} time  synchronization comes at the cost of reduced spectral efficiency, especially for short burst transmission, which is widely
employed in \ac{iot} use case of the fifth generation (5G) wireless systems \cite{de20195g}.

Guard intervals are useful   for time synchronization in OFDM systems  \cite{ven5G}. The guard interval can be in the form of \ac{cp} \cite{wang2015maximum,chin2011blind,van1997ml}, \ac{zp} \cite{wang2011frequency,  su2008new, wang2006frames}, and known symbol padding (KSP) \cite{van2012iterative}. The choice of \ac{zp} versus \ac{cp}  and KSP depends on several parameters, such as the operating \ac{snr}, delay spread of the fading channel, and coherent versus differential demodulation.
\ac{zp}-OFDM provides great benefits over CP-OFDM and KSP-OFDM in the sense that \cite{giannazpcp} 1) it guarantees symbol recovery regardless of the channel zero locations; hence, it can improve the BER, 2) it enables finite impulse response equalization of channels regardless of the channel nulls, 3) it makes channel estimation and channel tracking easier compared to that of CP-OFDM, and 4) it offers higher power efficiency.

While \ac{da} time synchronization for \ac{zp}-OFDM has been well explored in the literature \cite{nasir2016timing}, \ac{nda} approach has not been extensively investigated. Hence, the focus of this work is on \ac{nda} time synchronization for \ac{zp}-OFDM. 
\subsection{Related Work}
For \ac{zp}-OFDM, most existing \ac{da} approaches rely on periodic autocorrelation properties of the received signal induced by the employed training sequences with good autocorrelation properties \cite{li2008synchronization}. 
Moreover, most of the \ac{da} approaches developed for \ac{cp}-OFDM can be applied to \ac{zp}-OFDM \cite{chung2017preamble,zhang2011autocorrelation, abdzadeh2012novel,sheng2010novel}.  
On the other hand, 
to the best of the authors’ knowledge, the few existing \ac{nda} synchronization approaches for \ac{zp}-OFDM have been developed based on change point detection methods \cite{LeNir2010} or cyclostationarity properties in OFDM signal \cite{bolcskei2001blind}. Synchronization algorithms based on change point detection usually employ a transition metric, tracing the ratio of power in two slicing windows corresponding to each \ac{to} hypothesis in the OFDM packet.
 These \ac{nda} solutions do not always offer a good performance in terms of lock-in probability, i.e. correct synchronization. Moreover, we show that their performance further drops in doubly-selective (time- and frequency-selective) fading channels. 
 \vspace{-0.5em}
\subsection{Motivation}
In the presence of perfect time synchronization, \ac{zp}-OFDM offers higher reliability and power efficiency compared to \ac{cp}-OFDM \cite{giannazpcp}. 
Hence, \ac{zp}-OFDM  can be considered  an appealing solution for low-power \ac{iot} networks. 
One of the main reasons that \ac{zp}-OFDM has not been extensively used 
 in practice is attributed to the lack of an efficient time synchronization method.  
While sub-optimal \ac{cp}-based synchronization algorithms in \ac{cp}-OFDM offer high lock-in  probabilities,
there is no  synchronization algorithm with comparable performance for \ac{zp}-OFDM \cite{nasir2016timing}.
In addition, derivation of the optimal \ac{ml} time synchronization for \ac{zp}-OFDM, which results in a high
lock-in  probability,
has remained intact. This is mainly because there currently exist no compact expression for the joint  \ac{pdf} of the received samples. 
Moreover, most existing synchronization methods ignore time-selectivity of the fading channel, i.e., the devastating effect of mobility and Doppler spread on time synchronization.    

Motivated by the advantages of \ac{zp}-OFDM for the emerging low-power wireless networks, we study the problem of  \ac{nda}-\ac{ml} time synchronization for \ac{zp}-OFDM. 
In the first step, and for the first time, we derive a closed-form expression for the \ac{pdf} of the received \ac{zp}-OFDM samples in frequency-selective fading channel. We then use the PDF of the samples to approximate their joint PDF. The joint \ac{pdf} is given to a hypothesis testing algorithm to find the \ac{to}. Simulation results show that the proposed NDA-ML time synchronization algorithm significantly outperforms other existing \ac{nda} time synchronization methods. For example, at 5 dB $E_{\rm{b}}/N_0$ for WiMAX SUI-4 channels \cite{LeNir2010}, the
proposed \ac{nda}-ML time synchronization algorithm achieves a lock-in probability of 0.85 while the state of the art \cite{LeNir2010} achieves 0.55. 
 \vspace{-0.5em}
\subsection{Contributions}
The main contributions of this paper are as follows: 

\begin{itemize}

\item A closed-form approximate expression for the joint \ac{pdf} of the received \ac{zp}-OFDM samples in frequency-selective fading channels is derived.

  \item The \ac{nda}-\ac{ml} time synchronization for \ac{zp}-OFDM  in frequency-selective fading channels  is analytically derived. 
    The proposed method exhibits the following advantages: (i) unlike existing sub-optimal \ac{nda} time synchronization methods, it is applicable to highly selective fading channels, such as the ones in underwater communications and \ac{uwb} communication, (ii) it is valid for doubly-selective fading channels, and (iii) it can be used for both frame and symbol synchronization.


\item A  low-complexity implementation of the developed theoretical \ac{nda}-\ac{ml} time synchronization algorithm by using Monte Carlo sampling (MCS) technique and  golden-section  search is proposed.

\item Complexity analysis of the proposed time synchronization methods is provided.

\end{itemize}

The remaining of the paper is organized as follows: Section~\ref{sec: sys model} introduces the system model. Section~\ref{sec: ml estimator} describes the derivation of the \ac{nda}-\ac{ml} time synchronization. In Section~\ref{sec: importance samp},  a practical implementation of the proposed time synchronization algorithm by employing MCS technique and golden-section  search is presented. Simulation results are provided in Section~\ref{simmp}, and conclusions are drawn in Section~\ref{sec: conclu}.

\textit{Notations}: Throughout this paper, we use bold lowercase and 
bold uppercase letters to show  
column vectors and matrices, respectively. The symbols  $(\cdot)^{*}$, $(\cdot)^{\{rm{T}}$, $|\cdot|$, and $\lfloor{\cdot}\rfloor$ denote conjugate, transpose, absolute value, and
floor function, respectively. $\mathbb{E}\{\cdot\}$ denotes 
the statistical expectation, and $\Re\{\cdot\}$ and $\Im\{\cdot\}$ represent the the real and
imaginary parts, respectively, The subscripts 
${\rm{I}}$ and ${\rm{Q}}$ show the in-phase and quadrature components of a variable. The symbols $\bigcap$ and $\bigcup$ denote the set intersection and union operands, respectively.

\section{System Model} \label{sec: sys model}
We consider a ZP-OFDM system in frequency-selective fading channel. Let $\{x_{n,k}\}_{k=0}^{n_{\rm{x}}-1}$, $\mathbb{E}\{|x_{n,k}|^2\}= \sigma^2_{\rm{x}}$, be the $n_{\rm{x}}$ complex data to  be transmitted  in the $n$-th  OFDM  symbol. The OFDM modulated baseband signal is given by \cite{li2006orthogonal,hwang2008ofdm}
\begin{align}\label{477www131314113}
x_n(t)=\sum_{k=0}^{n_{{\rm{x}}-1}}x_{n,k} e^{\frac{j2\pi k t}{T_{\rm{x}}}},\,\,\,\,\,\,\,\,\,\,\ 0 \le t \le T_{\rm{x}},
\end{align}
where  $T_{\rm{x}}$ and $W\triangleq n_{\rm{x}}/T_{\rm{x}}$
is the OFDM symbol duration and channel bandwidth, respectively.
To avoid \ac{isi}, zero-padding guard interval of length $T_{\rm{z}}$ is added to each OFDM symbol. Hence, $x_n(t)$ is extended into $s_n(t)$ as
\begin{align}\label{cp_add}
s_n(t)=
\begin{cases}
x_n(t)  \,\,\,\,\,\,\,\,\,\,\,\ 0 \le t < T_{\rm{x}}  \\
0 \,\,\,\,\,\,\,\,\,\,\,\,\,\,\,\,\,\,\,\,\,\,\,\,  T_{\rm{x}} \le t < T_{\rm{x}}+T_{\rm{z}}.
\end{cases}
\end{align}
The OFDM signal in \eqref{cp_add} propagates through a multi-path fading channel with the equivalent baseband  impulse response as follows
\begin{align}
h(\tau)=\sum\limits_i {}\alpha_i \delta (\tau-\tau_i),
\end{align}
where $\alpha_i \in \mathbb{C}$. The delay spread of the channel in the ensemble sense is $\tau_{\rm{d}}$ where 
$\mathbb{E}\{|\alpha_i|^2\}=0$ for $\tau_i > \tau_{\rm{d}}$.

When the transmitter and receiver are synchronized and there is no \ac{isi}, i.e. $T_{\rm{z}} \ge \tau_{\rm{d}}$,
the complex baseband received signal sampled at multiples of $T_{\rm{sa}} \triangleq  1/W$  is given by 
{
\begin{align}\label{uiomoer21}
y_{n}[k]=\sum_{l=0}^{n_{\rm{h}}-1} h[nn_{\rm{s}}+k;l] s_{n}[k-l]+w_n[k],
\end{align}}
$m=0,1,\dots,n_{\rm{s}}-1$, where $n_{\rm{s}}=\lfloor(T_{\rm{x}}+T_{\rm{z}})/T_{\rm{sa}}\rfloor$, $s_n[m]\triangleq s_n(mT_{\rm{sa}})$,
\begin{align}
h[l]=\sum\limits_i {} \alpha_i 
g[l-\tau_iW],
\end{align}
$l=0,1,\dots,n_{\rm{h}}-1$, 
$n_{\rm{h}} \triangleq \lfloor\tau_{\rm{d}}/T_{\rm{sa}}\rfloor$, $g[l] \triangleq  g(lT_{\rm{sa}})$,
$g(t) \triangleq g_{\rm{Tx}}(t)\circledast g_{\rm{Rx}}(r)$ with 
$g_{\rm{Tx}}(t)$ and $g_{\rm{Rx}}(t)$ as transmit and receive filters, respectively. Also, $w_n[m] \sim \mathcal{CN}(0,\sigma^2_{\rm{w}})$ is the \ac{awgn}.

We consider the wide-sense stationary uncorrelated scattering (WSSUS) assumption so that the channel coefficients from different delay taps are independent. The channel taps $h[l]$, $l=0,1,\dots,n_{\rm{h}}-1$, are modeled as  statistically independent zero-mean complex Gaussian  random  variables (Rayleigh  fading) with the delay profile   
\begin{align}\label{7u8i0000}
\mathbb{E}{\{}h[l]h^*[l-m]{\}}=\sigma_{{{\rm{h}}_l}}^2\delta[m],
\end{align}
$l=0,1,\dots, n_{\rm{h}}-1$, where 
\begin{align}
\sigma_{{{\rm{h}}_l}}^2=\mathbb{E}\{|h[l]|^2\}=\sum_i \mathbb{E}{\{}|\alpha_i|^2\}{|}g[l-\tau_iW]{|}^2. 
\end{align}
It is assumed that the delay profile of the fading channel is known to the receiver.

{\it Remark 1:} The PDP of an environment is obtained through field measurements by transmitting a short pulse (wide-band) and measuring the received power as a function of delay at various locations in a small area during channel sounding.
These measurements are then averaged over spatial locations to generate a profile of the average received signal power as a function of delay.
\cite{Delay}.
 Theoretically, the PDP is defined as the expectation of the squared impulse response of the channel as 
\begin{equation}
\breve{p}(\tau) = \mathbb{E}\{|h(\tau)|^2\}.
\end{equation}
Assuming WSSUS scattering, the PDP is given as
\begin{equation}
\breve{p}(\tau) = \sum_{k=0}^{N}\alpha_k \delta(\tau-\tau_k).
\end{equation}
To determine the number of paths $N$, different criteria for model order selection are available in the existing literature. Estimating  the  path  delays  using  frequency domain pilots is equivalent to estimating the arrival angle using an  antenna  array  \cite{swindlehurst1998time}. Hence,  well-known  signal  processing techniques, e.g., estimation of signal parameters via rotational in-variance techniques (ESPRIT) \cite{roy1989esprit},  can be adopted for this purpose. With the estimates of path delays, the path gains $\alpha_k$, $k=0,1,\dots,N$, can be obtained using typical linear estimators \cite{liu2014channel}.

We define 
$n_{\rm{z}} \triangleq \lfloor{T_{\rm{z}}}/{T_{\rm{sa}}}\rfloor$
as the number of padded zeros. Hence, the number of samples per \ac{zp}-OFDM  symbol is
$n_{\rm{s}} \triangleq n_{\rm{x}}+n_{\rm{z}}$.
Equation \eqref{uiomoer21} can be written in a vector form as follows
\begin{align}\label{signal}
{\bf y}_{n}=
\begin{cases}
 {\bf H} {\bf s}_n   +{\bf w}_n \triangleq  {\bf v}_n +{\bf w}_n,  \,\,\,\,\,\,\,\,\,\,\,\ n \ge 0  \\
{\bf w}_n, \,\,\,\,\,\,\,\,\,\,\,\,\,\,\,\,\,\,\,\,\,\,\,\,\,\,\,\,\,\,\,\,\,\,\,\,\,\,\,\,\,\,\,\,\,\,\,\,\,\,\,\,\,\,\,\,\,\,\,\,\,\, n<0,
\end{cases}
\end{align}
where 
\begin{equation}\label{eq:11}
{\bf{s}}_n \triangleq
\left[ \begin{array}{l}
\,\,\,\,\,\ s_n[0]\\
\,\,\,\,\,\ s_n[1]\\
\,\,\,\,\,\,\,\,\,\ \vdots \\
s_n[{n_{\rm{s}}}-1]
\end{array} \right]=
\left( 
                                  \begin{array}{c}
                       x_{n}(0)\\
                                         \vdots\\
                                         x_n((n_{\rm x}-1)T_{\rm{sa}}) \\
                                               0\\
                                         \vdots\\
                                         0\\
          
                                  \end{array}
                            \right)
\setstackgap{L}{1.2\normalbaselineskip}
\vcenter{\hbox{\stackunder[1pt]{%
  \left.{\Centerstack{\\ \\}}\right\}n_{\rm{x}}%
}{
  \left.{\Centerstack{\\ \\}}\right\}n_{\rm{z}}%
}}},
\end{equation}
\begin{subequations} \label{uuoio}
   \begin{align}
    {\bf y}_{n} &\triangleq \Big{[}y_n[0]~y_n[1]~\dots~  y_n[n_{\rm{s}}-1] \Big{]}^{\rm{T}},\\ \label{eq:11a}
    {\bf w}_n &\triangleq \Big{[}w_n[0]~w_n[1]~\dots~  w_n[n_{\rm{s}}-1] \Big{]}^{\rm{T}},\\
    {\bf v}_n &\triangleq \Big{[}v_n[0]~ v_n([1] \dots  v_n[n_{\rm{s}}-1] \Big{]}^{\rm{T}},
\end{align}
\end{subequations}

${\bf H}$ is an $n_{\rm{s}}\times n_{\rm{s}}$ matrix, where its $i$-th ($0\le i \le n_{\rm{s}}-1$) column is $[ {\bf 0}_{i-1} \ h[nn_{\rm{s}}+i-1;0] \ h[nn_{\rm{s}}+i-1;1] \ \dots \ h[nn_{\rm{s}}+i-1;n_{\rm h}-1] ~ {\bf{0}}_{n_{\rm{s}}-n_{\rm h}-i+1}]^\text{T}$, ${\bf v}_n \triangleq{\bf H} {\bf s}_n$, and ${\bf w}_n$ is the \ac{awgn} vector.

Based on the \ac{clt}, the $T_{\rm{sa}}$-spaced baseband OFDM samples can be accurately modeled by \ac{iid} zero-mean complex Gaussian random variables as follows \cite{banelli2003theoretical}
\begin{align} \label{909m0p}
x_n(mT_{\rm{sa}}) \sim \mathcal{CN}(0,\sigma^2_{\rm{x}}), 
\end{align}
where 
\begin{align}
\mathbb{E}\Big{\{}x_n(mT_{\rm{sa}})x_n^*(kT_{\rm{sa}})\Big{\}}=\sigma^2_{\rm{x}} \delta[m-k].   
\end{align}

We consider that the transmitter and receiver are not synchronized in time domain, and there is a \ac{to} between them defined as $\tau \triangleq dT_{\rm{sa}}+\epsilon$, where  $d$ and $\epsilon$ represent the integer and fractional part of the \ac{to}. The fractional part of the delay appears as phase offset at each  sub-carrier. Hence, its effect is compensated when carrier frequency offset  is estimated \cite{morelli2007synchronization}. However, estimation of the integer part $d$ is required in order to detect the starting point of the \ac{fft} at the receiver. Estimating the integer part of the \ac{to} is the subject of this paper. We consider that the transmitter does not use pilot or preamble for \ac{to} estimation; thus, the receiver relies on the received samples, noise samples in the zero-guard interval, and the second-order statistics of the fading channel to estimate the \ac{to}.

\section{Maximum Likelihood Estimation}
\label{sec: ml estimator} 
In this section, we analytically derive the \ac{nda}-ML \ac{to} estimator for ZP-OFDM. 
For the ease of discussions and presentation, we consider $d\in \{-n_{\rm{s}}+1,\dots,-1,0,1,\dots,n_{\rm{s}}-1\}$. However,
the range of $d$  can be considered arbitrary large.


We first formulate \ac{to} estimation as a 
multiple hypothesis testing problem as 
${\rm{H}}_{p}: d=p$ where $-n_{\rm{s}}+1 \le p \le n_{\rm{s}}-1$. Since both positive and negative values of \ac{to} are considered, the ML estimator can address frame and OFDM symbol synchronization. 
Considering negative \ac{to} enables us to find the onset of the packet and incorporating positive \ac{to} enables us to find the starting point of the  ZP-OFDM symbols to efficiently apply \ac{fft} for channel equalization and data detection.

We consider that the OFDM receiver gathers 
$N$ observation vectors of length $n_{\rm{s}}$, ${\bf y}_{0}, {\bf y}_{1}, \dots, {\bf y}_{N-1}$, to estimate
the \ac{to}, $d$. The initial step for \ac{ml} derivation is to obtain the joint \ac{pdf} of the observation vectors under the $2n_{\rm{s}}+1$ \ac{to} hypotheses. We denote this joint \ac{pdf} given ${\rm{H}}_d$ by $f_{\rm{Y}}({\bf y}| {\rm{H}}_{d})$, where 
\begin{align}\label{eq:14}
 {\bf{y}} &= \big{[}y[0] \ y[1] \ \dots \ y[Nn_{\rm{s}}-1]\big{]}^{\rm{T}} \\ \nonumber
 &\triangleq[{\bf{y}}_0^{\rm{T}} \  {\bf{y}}_1^{\rm{T}} \  \dots \ {\bf{y}}_{N-1}^T {]}^{\rm{T}}   
\end{align}
with 
\begin{align}\label{eq:61}
y[nn_{\rm{s}}+m]  \triangleq y_{n}[m].
\end{align} 
as the $m$-th sample in the $n$-th block. By using the chain rule in probability theory \cite{leon1994probability}, we can write
\begin{align}\label{bnbnm}
f_{\bf Y}({\bf y}| &{\rm{H}}_{d}) \\ \nonumber 
  &= \prod_{n=0}^{N-1}  \prod_{m=0}^{n_{\rm{s}}-1}  f\Bigg{(}y_n[m] \,\Big|\,  \bigcap_{u=0}^{m-1} y_n[u], \bigcap_{k=0}^{n-1}{\bf{y}}_k, {\rm{H}}_d\Bigg{)}.
\end{align}
To obtain the joint \ac{pdf} in \eqref{bnbnm},
 we rely on Theorem \ref{uuonkml}.

\begin{theorem}\label{uuonkml}
The elements of the observation vector $\bf{y}$  in \eqref{eq:14}  irrespective to the value of \ac{to} are uncorrelated random variables, i.e., $\mathbb{E}\{y_n[u]y_{\tilde{n}}^*[v]\}=0$, $u \neq {{v}}$.
\end{theorem}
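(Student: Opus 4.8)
The plan is to establish the claim purely at the level of second-order statistics, by showing that with the global indexing $i \triangleq nn_{\rm s}+u$, $j \triangleq \tilde n n_{\rm s}+v$ the cross-correlation $\mathbb{E}\{y_n[u]\,y_{\tilde n}^{*}[v]\}$ is a scalar multiple of the Kronecker delta $\delta[i-j]$ and therefore vanishes whenever $i\neq j$ (in particular for $n=\tilde n$, $u\neq v$). The only ingredients I would use are already in place: the mutual independence and zero-mean property of the data, the channel taps, and the AWGN; the uncorrelated-scattering identity \eqref{7u8i0000}; and the orthogonality of the $T_{\rm sa}$-spaced OFDM samples within one symbol recorded in \eqref{909m0p}, together with the fact that distinct OFDM symbols carry independent data.

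First I would substitute the input--output relation \eqref{uiomoer21} in the form $y_n[m]=v_n[m]+w_n[m]$, $v_n[m]=\sum_{l=0}^{n_{\rm h}-1}h[nn_{\rm s}+m;l]\,s_n[m-l]$, and expand $\mathbb{E}\{y_n[u]\,y_{\tilde n}^{*}[v]\}$ into the signal--signal, the two signal--noise, and the noise--noise terms. Because the noise is independent of the channel and the data and every quantity is zero-mean, the two signal--noise terms factor into a product containing $\mathbb{E}\{w\}=0$ and drop out, while the noise--noise term equals $\sigma_{\rm w}^{2}\,\delta[i-j]$, which is already of the desired form. It then remains to handle $\mathbb{E}\{v_n[u]\,v_{\tilde n}^{*}[v]\}$.

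For the signal term I would expand the double sum over the delay indices $l,l'$ and, using independence of the channel from the data, factor each summand as $\mathbb{E}\{h[nn_{\rm s}+u;l]\,h^{*}[\tilde n n_{\rm s}+v;l']\}\,\mathbb{E}\{s_n[u-l]\,s_{\tilde n}^{*}[v-l']\}$. Uncorrelated scattering makes the first factor vanish unless $l=l'$; the second factor is zero whenever either argument lies in the zero-padding region, zero when the two arguments belong to different OFDM symbols (independent, zero-mean data), and, when both are data samples of the same symbol, equals $\sigma_{\rm x}^{2}\delta[(u-l)-(v-l')]$ by \eqref{909m0p}. Hence a summand survives only if simultaneously $l=l'$, $n=\tilde n$, and $u-l=v-l'$, which forces $u=v$, i.e.\ $i=j$; this contradicts $i\neq j$, so $\mathbb{E}\{v_n[u]\,v_{\tilde n}^{*}[v]\}=0$, which would finish the proof.

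I expect the main obstacle to be bookkeeping near symbol boundaries rather than anything conceptual: under a misaligned TO hypothesis (and, more generally, if one drops the no-ISI condition $T_{\rm z}\ge\tau_{\rm d}$) a single received sample can combine data from two adjacent ZP-OFDM symbols, so one must carefully identify which $s_n[\cdot]$ entries are data and which are padding and verify that the two delta constraints still combine as above; independence of the data across symbols is what makes every cross-symbol contribution disappear. I would close by observing that the temporal correlation of the taps was never used---only uncorrelated scattering in the delay variable---so the statement carries over verbatim to doubly-selective channels, and that the offset $d$ entered the computation nowhere, which is precisely the asserted independence of the TO.
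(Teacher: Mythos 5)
Your proposal is correct and follows essentially the same route as the paper's Appendix~A: decompose $y$ into signal plus noise, discard the cross terms by zero-mean independence, and reduce the signal--signal term via the whiteness of the $T_{\rm sa}$-spaced data samples together with the uncorrelated-scattering identity \eqref{7u8i0000}, the only difference being that you apply the two Kronecker-delta constraints in the opposite order and retain the time-varying tap notation $h[nn_{\rm s}+m;l]$ where the appendix writes $h[l]$. Your closing remark that only uncorrelated scattering in the delay variable (and not temporal tap correlation) is used is a slightly more explicit justification of the doubly-selective case than the paper gives, but it is not a different argument.
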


\begin{proof} See Appendix \ref{proof: theo uncorrelated}.
\end{proof}

Fig. \ref{fig:example} illustrates the scatter plot of the in-phase components of ${y_{10}[100]}$ (i.e., ${y_{{10}_{\rm{I}}}[100]}=\Re\{{y_{10}[100]}\}$) and ${y_{10}[101]}$ (i.e., ${y_{{10}_{\rm{I}}}[101]}=\Re\{{y_{10}[101]}\}$)  given hypothesis ${\rm{H}}_0$. As seen, there is no correlation between the two successive samples.


\begin{figure}
\vspace{-1em}
\centering
    \includegraphics[height=2.835in]{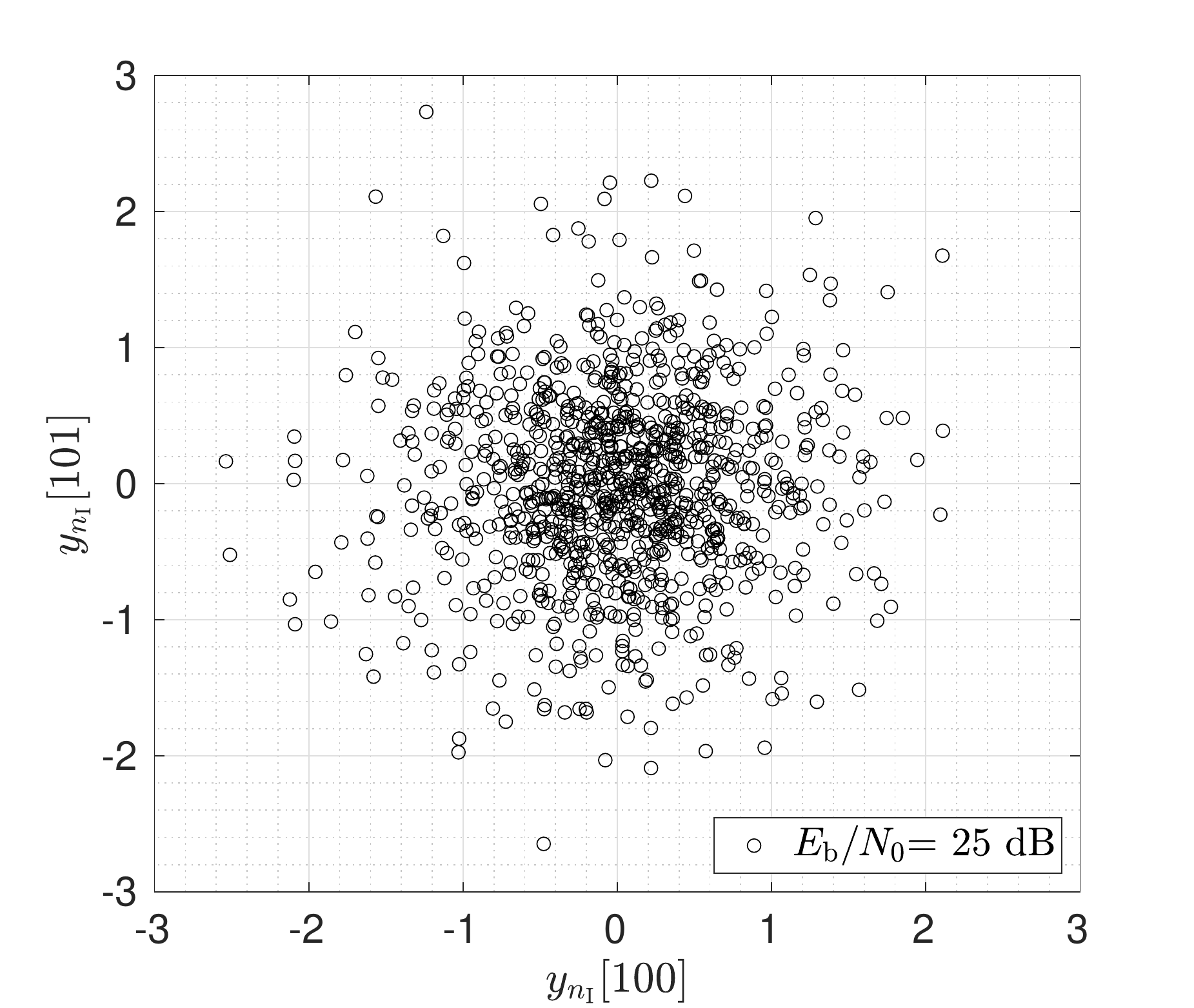}  
    \caption{Scatter plot of $y_{10_{\rm{I}}}[100]$ and $y_{10_{\rm{I}}}[101]$ given hypothesis ${\rm{H}}_0$ at $25$ dB $E_{\rm{b}}/N_0$  ($n_{\rm{x}}=128, n_{\rm{z}}=15, n_{\rm{h}}=10, n=10$). }
    \label{fig:example}%
\label{fig: correlation}
\end{figure}
\begin{figure}
\vspace{-2em}
\centering
\includegraphics[height=2.835in]{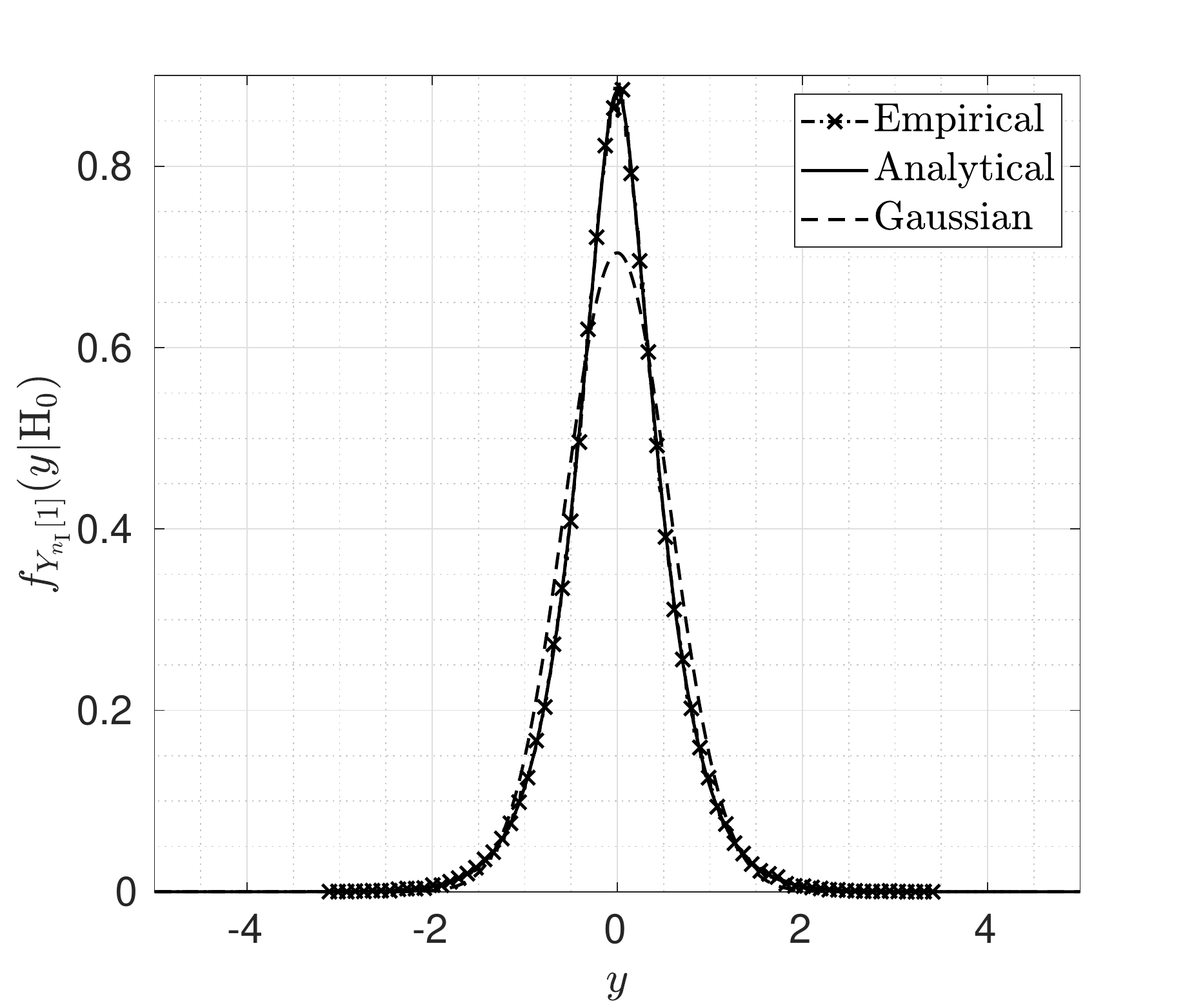}
  \caption{The empirical and analytical \ac{pdf}s of ${y_{n_{\rm{I}}}[1]}$ given hypothesis ${\rm{H}}_0$ at $15$ dB $E_{\rm{b}}/N_0$ ($n_{\rm{x}}=128, n_{\rm{z}}=15, n_{\rm{h}}=10, n=10$). The empirical \ac{pdf} was obtained for $10^6$ samples.}\label{fig: Empirical vs Analytical}
  \vspace{-1em}
\end{figure}

According to Theorem \ref{uuonkml}, the observation samples in \eqref{eq:14} are uncorrelated random variables. Also, we can show that the in-phase $y_{n_{{\rm{I}}}}[m]$ and quadrature $y_{n_{{\rm{Q}}}}[m]$ components of the $m$-th received sample from the $n$-th observation vector, i.e., 
$y_n[m]=y_{n_{{\rm{I}}}}[m]+
iy_{n_{{\rm{Q}}}}[m]$, 
are uncorrelated random variables.

Although uncorrelated random variables are not necessarily independent, the independency assumption  becomes more valid for the received samples in the case of very fast-varying channels where 
the maximum Doppler spread of the channel approaches infinity. When the maximum Doppler spread reaches infinity, the channel taps contributing to one received sample become independent from other samples.
Moreover, most practical algorithms are particularly sensitive to the distribution and less to correlation. Thus, we can consider that the observation samples are independent random variables to simplify the signal model \cite{kay2013fundamentals}.  Accordingly, we approximate  
the joint \ac{pdf} in \eqref{bnbnm} by the multiplication of the first-order \ac{pdf}s as
\begin{align}\label{eq: indepency 1}
\hspace{-0.2em} f_{\bf{Y}}({\bf y}| {\rm{H}}_{d})  &\approx 
    \prod_{n=0}^{N-1}
\prod_{m=0}^{n_{\rm{s}}-1}  f_{Y_n[m]}(y_n[m]| {\rm{H}}_{d})\\ \nonumber 
& \approx\prod_{n=0}^{N-1}
\prod_{m=0}^{n_{\rm{s}}-1}  f_{Y_{n_{\rm{I}}[m]}}(y_{n_{{\rm{I}}}}[m] |  {\rm{H}}_{d} ) f_{Y_{n_{\rm{Q}}}[m]}(y_{n_{{\rm{Q}}}}[m] |  {\rm{H}}_{d}),
\end{align}
where $f_{Y_{n_{\rm{I}}}[m]}(\cdot |  {\rm{H}}_{d}) $ and $f_{Y_{n_{\rm{Q}}}[m]}(\cdot |  {\rm{H}}_{d})$ are the \ac{pdf} of the in-phase and quadrature components of the $m$-th received sample from the $n$-th observation vector.  
The accuracy of this assumption is verified in the simulation results.

\subsection{\ac{pdf} Derivation for Delay Hypothesis ${\rm H}_0$ }

Theorem \eqref{theo: pdf y} provides closed-form expressions for the \ac{pdf} of the in-phase and quadrature components of the  $m$-th received sample from the $n$-th observation vector given the hypothesis $\mathrm{H}_0$ ($d=0$). 
We later show that the conditional \ac{pdf}s given hypothesis $\mathrm{H}_d$, $d \ne 0$, can be easily extracted from these \ac{pdf}s due to
the periodicity incurred by the zero-padded guard interval.

\begin{theorem} \label{theo: pdf y}
The PDF of the in-phase (quadrature) component of the received samples  $y_{n_{{\rm{I}}}}[m]$ ($y_{n_{{\rm{Q}}}}[m]$), $m \in  \{ m ~|~ 0 \le m \le n_{\rm{s}}-1  ~{\rm{when}}~  n<0 \} \cup \{ m ~|~ n_{\rm{x}}+n_{\rm{h}}-1 \le m \le n_{\rm{s}}-1 ~{\rm{when}}~  n \ge 0\}$ for $d=0$     
is given by  \footnote {In Theorem \ref{theo: pdf y}, by $n<0$, we mean as if the receiver starts to receive samples before any data is transmitted from the transmitter. 
} 
\begin{equation} \label{uioooppioio}
\begin{split}
f_{Y_{n_{\rm{I}}}[m]}(y|{\rm{H}}_0) &=
\frac{1}{\sqrt{\pi \sigma^2_{\rm w} } }  \exp\Big{(}-\frac{y^2}{\sigma^{2}_{\rm w}} \Big{)}.
\end{split}
\end{equation}
Also, for $d=0$ and $m \in  \{ m ~|~ 0 \le m \le n_{{\rm{x}}}+n_{{\rm{h}}}-2~{\rm{when}}~  n \ge 0\}$, we have
\begin{align} \label{eq: pdf}
f_{Y_{n_{\rm{I}}}[m]}& (y|{\rm{H}}_0)=f_{Y_{n_{\rm{I}}}[m]} (-y|{\rm{H}}_0) \\ \nonumber
&=
\Big{(}\prod_{i=a}^{b} \lambda_i\Big{)}^2 \sum_{j=a}^{b}  \sum_{n=a}^{b} \frac{ e^{\big{(}\frac{\lambda_j \sigma_{\rm{w}}}{2}\big{)}^2}}{ \prod_{k=a, k \neq j}^{b} (\lambda_k - \lambda_j)} \\ \nonumber
&~ \times \frac{ 1 }{ \prod_{u=a, u \neq j}^{b} (\lambda_u - \lambda_n)} \frac{1}{(\lambda_j + \lambda_n) } \\ \nonumber 
& \times 1/2 \Bigg[ e^{-\lambda_j y} \Bigg( 1-\Phi \Big(\frac{\lambda_j \sigma_{\rm{w}}}{2}-\frac{y}{\sigma_{\rm{w}}}\Big) \Bigg) + \\
&~~~~~~~~~~~~~~~~~~~~~~ e^{\lambda_j y} \Bigg(1-\Phi \Big(\frac{\lambda_j \sigma_{\rm{w}}}{2}+\frac{y}{\sigma_{\rm{w}}} \Big) \Bigg)  \Bigg],
\end{align}
\noindent where  $\lambda_k \triangleq 2/( \sigma_{h_k}  \sigma_{{\rm x}})$, $\Phi(x)= {\rm{erf}}(x)=\frac{2}{\sqrt{\pi}}\int_{0}^{x} e^{-t^2} dt$ denotes the Gaussian error function, 
and $a$ and $b$ depend on $m$, and are given as follows
\begin{equation} \label{eq: a b no isi}
\hspace{-1em}
(a,b) = 
     \begin{cases}
      (0,m) &~ 0 \le m \le n_{\rm{h}}-2\\
       (0,n_{\rm{h}}-1) &~ n_{\rm{h}}-1 \le m \le n_{\rm{x}}-1\\
       (m-n_{\rm{x}}+1,n_{\rm{h}}-1) &~ n_{\rm{x}} \le m \le n_{\rm{x}}+n_{\rm{h}}-2.\\
     \end{cases}
\end{equation}
Similar expressions hold for $f_{Y_{n_{\rm{Q}}}[m]}(y|{\rm{H}}_0)$.
\end{theorem}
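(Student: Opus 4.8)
The plan is to obtain the characteristic function (CF) of $y_{n_{\rm I}}[m]$ under $\mathrm H_0$ in closed form, recognise it as a finite product of Laplace CFs times a Gaussian CF, and then invert it by a two‑stage partial‑fraction expansion. First I would isolate the signal part: writing $y_n[m]=v_n[m]+w_n[m]$ with $v_n[m]=\sum_{l=0}^{n_{\rm h}-1}h[l]\,s_n[m-l]$ (the channel‑time index suppressed), note that under $\mathrm H_0$ the entry $s_n[m-l]$ equals a data symbol $x_n[m-l]$ exactly when $0\le m-l\le n_{\rm x}-1$; together with $0\le l\le n_{\rm h}-1$ this restricts $l$ to $\{a,\dots,b\}$ with $(a,b)$ as in \eqref{eq: a b no isi}. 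When that index set is empty — i.e.\ $n<0$, or $n\ge0$ and $m\ge n_{\rm x}+n_{\rm h}-1$ — one has $v_n[m]=0$, so $y_n[m]=w_n[m]\sim\mathcal{CN}(0,\sigma^2_{\rm w})$ and $y_{n_{\rm I}}[m]\sim\mathcal N(0,\sigma^2_{\rm w}/2)$, which is \eqref{uioooppioio}. Otherwise $v_n[m]=\sum_{l=a}^{b}h[l]\,x_n[m-l]$, a sum of products of independent circularly‑symmetric complex Gaussians.

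The conceptual core is to show that each real part $\Re\{h[l]x_n[m-l]\}$ is $\mathrm{Laplace}(\lambda_l)$‑distributed. Conditioning on $h[l]$, the quantity $\Re\{h[l]x_n[m-l]\}=h_{l,\mathrm I}x_{\mathrm I}-h_{l,\mathrm Q}x_{\mathrm Q}$ is zero‑mean Gaussian with variance $|h[l]|^2\sigma^2_{\rm x}/2$, so its conditional CF is $\exp(-t^2|h[l]|^2\sigma^2_{\rm x}/4)$; averaging over $|h[l]|^2$, which is exponential with mean $\sigma^2_{{\rm h}_l}$, yields $\lambda_l^2/(\lambda_l^2+t^2)$, exactly the CF of $\mathrm{Laplace}(\lambda_l)$ with $\lambda_l=2/(\sigma_{{\rm h}_l}\sigma_{\rm x})$. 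Since the taps $\{h[l]\}$ are independent (WSSUS), the symbols $\{x_n[m-l]\}_l$ are distinct hence independent, and the channel is independent of the data, the terms $\{\Re\{h[l]x_n[m-l]\}\}_{l=a}^{b}$ are mutually independent; adding the independent real noise $\mathcal N(0,\sigma^2_{\rm w}/2)$, the CF of $y_{n_{\rm I}}[m]$ is $\Psi(t)=e^{-\sigma^2_{\rm w}t^2/4}\prod_{l=a}^{b}\lambda_l^2/(\lambda_l^2+t^2)$. The same computation gives the claimed expression for $y_{n_{\rm Q}}[m]$ verbatim, since $\Im\{h[l]x_n[m-l]\}$ has the identical conditional law.

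It remains to invert $\Psi$. Setting $z=it$, I would factor $\prod_l\lambda_l^2/(\lambda_l^2-z^2)=\big(\prod_l\lambda_l\big)^2\big[\prod_l(\lambda_l-z)\big]^{-1}\big[\prod_l(\lambda_l+z)\big]^{-1}$ and expand each factor by residues, so that $\big[\prod_l(\lambda_l-z)\big]^{-1}=\sum_j\frac{1}{\prod_{k\ne j}(\lambda_k-\lambda_j)}\frac{1}{\lambda_j-z}$ and likewise $\big[\prod_l(\lambda_l+z)\big]^{-1}=\sum_n\frac{1}{\prod_{u\ne n}(\lambda_u-\lambda_n)}\frac{1}{\lambda_n+z}$. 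Multiplying the two, and using $\frac{1}{(\lambda_j-z)(\lambda_n+z)}=\frac{1}{\lambda_j+\lambda_n}\big(\frac{1}{\lambda_j-z}+\frac{1}{\lambda_n+z}\big)$, produces the double sum with coefficient $\big[\prod_{k\ne j}(\lambda_k-\lambda_j)\prod_{u\ne n}(\lambda_u-\lambda_n)(\lambda_j+\lambda_n)\big]^{-1}$. Inverting the Fourier pairs $\frac{1}{\lambda+it}\leftrightarrow e^{-\lambda y}$ for $y>0$ and $\frac{1}{\lambda-it}\leftrightarrow e^{\lambda y}$ for $y<0$ gives the PDF of $v_{n_{\rm I}}[m]$ as a combination of one‑sided exponentials; convolving each of these with the $\mathcal N(0,\sigma^2_{\rm w}/2)$ noise density and completing the square turns it into $e^{(\lambda\sigma_{\rm w}/2)^2}e^{\mp\lambda y}\,\tfrac12\big(1-\Phi(\tfrac{\lambda\sigma_{\rm w}}{2}\mp\tfrac{y}{\sigma_{\rm w}})\big)$, which is the bracketed expression in \eqref{eq: pdf}. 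The last move is to observe that the coefficient above is symmetric under $j\leftrightarrow n$, so relabelling indices lets both exponential contributions be written with the single index $j$, collapsing everything to \eqref{eq: pdf}; the evenness $f_{Y_{n_{\rm I}}[m]}(y|\mathrm H_0)=f_{Y_{n_{\rm I}}[m]}(-y|\mathrm H_0)$ follows at once from the symmetry of $v_{n_{\rm I}}[m]$ and of the noise.

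I expect the main obstacle to be purely bookkeeping: carrying the residue coefficients correctly through the two‑stage partial‑fraction expansion and then through the Gaussian convolution, and in particular spotting the $j\leftrightarrow n$ symmetry that permits the final collapse to a single exponential index. A secondary technical point is that the simple‑pole expansion presumes the $\lambda_l$, $l=a,\dots,b$, are distinct — true for a generic power‑delay profile — with the degenerate (repeated‑$\lambda$) case recovered by a continuity argument.
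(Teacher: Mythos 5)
Your proposal is correct and follows essentially the same route as the paper's Appendix B: identify the noise-only indices via the support of the convolution to get the Gaussian case, recognize the characteristic function of $v_{n_{\rm I}}[m]$ as $\prod_{k=a}^{b}\bigl(1+\sigma_{h_k}^{2}\sigma_{\rm x}^{2}t^{2}/4\bigr)^{-1}$ (a product of Laplace CHFs), recover the density of the signal part by partial fractions, and convolve with the $\mathcal N(0,\sigma_{\rm w}^{2}/2)$ noise density using a completed square to produce the erf terms. The only cosmetic difference is that you invert the CHF directly by residue expansion, whereas the paper first writes $V_{n_{\rm I}}[m]=V_1-V_2$ as a difference of two hypoexponential sums and quotes the known closed-form density of each before convolving in the $v$-domain; your remark that the simple-pole expansion requires distinct $\lambda_l$ is a point the paper leaves implicit.
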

\begin{proof}
See Appendix \ref{proof: theo}.
\end{proof}

Fig. \ref{fig: Empirical vs Analytical} illustrates the derived \ac{pdf} in \eqref{eq: pdf} for ${y_{n_{\rm{I}}}[1]}$ given hypothesis ${\rm{H}}_0$ at 15 dB $E_{\rm{b}}/N_0$ . 
For comparison, we also show the empirical \ac{pdf} obtained by histogram density estimator and theoretical Gaussian \ac{pdf}. As seen, the derived \ac{pdf} accurately matches the empirical \ac{pdf}. However, it exhibits a larger tail compared to the Gaussian distribution.   
In Table \ref{table: pdfs metrics}, we compare the variance, kurtosis, and skewness of the derived \ac{pdf} in \eqref{eq: pdf}, the empirical histogram density estimation of the \ac{pdf}, and the Gaussian distribution. The kurtosis measures the fourth-order central moment of the random variable $Y_{n_{\rm{I}}}[1]$ with mean $\mu\triangleq\mathbb{E}\{Y_{n_{\rm{I}}}[1]|{\rm{H}}_{0}\}$, and the skewness is a measure of the symmetry in the distribution.  
Large deviation from the mean yields large values of kurtosis. 
For fair comparison, we consider the normalized kurtosis $\kappa$ and skewness $\xi$ defined as 
\begin{align} \label{hjiuoiokl}
\kappa \triangleq \frac{\mathbb{E}\{(Y_{n_{\rm{I}}}[1]-\mu)^4\}}
{\mathbb{E}^2\{(Y_{n_{\rm{I}}}[1]-\mu)^2\}},
\end{align}
and 
\begin{align}
 \xi \triangleq \frac{\mathbb{E}\{(Y_{n_{\rm{I}}}[1]-\mu)^3\}}
{\mathbb{E}^{\frac{3}{2}}\{(Y_{n_{\rm{I}}}[1]-\mu)^2\}}.
\end{align}
To estimate the normalized kurtosis and skewness for the emirical \ac{pdf}, we use  
\begin{align} \label{gfdgdgdkk}
\hat{\kappa} \triangleq \frac{\frac{1}{M}\sum_{n=0}^{M-1}({y_{n_{\rm{I}}}[1]}-\hat{\mu})^4}{\big{(}\frac{1}{M}\sum_{n=0}^{M-1}({y_{n_{\rm{I}}}[1]}-\hat{\mu})^2\big{)}^2},
\end{align}
and
\begin{align} \label{gfdgdgdkk1}
{\hat{\xi}} \triangleq \frac{\frac{1}{M}\sum_{n=0}^{M-1}({y_{n_{\rm{I}}}[1]}-\hat{\mu})^3}{\big{(}\frac{1}{M}\sum_{n=0}^{M-1}({y_{n_{\rm{I}}}[1]}-\hat{\mu})^2\big{)}^{\frac{3}
{2}}},
\end{align}
where 
\begin{align}
\hat{\mu}=\frac{1}{M}\sum_{n=0}^{M-1}{y_{n_{\rm{I}}}[1]}. 
\end{align}
To estimate the kurtosis and skewness in \eqref{gfdgdgdkk} and \eqref{gfdgdgdkk1}, we set $M=10^6$. 
As seen in Table \ref{table: pdfs metrics}, the theoretical kurtosis obtained by \eqref{hjiuoiokl} equals the empirical kurtosis in \eqref{gfdgdgdkk} with precision of $0.01$.
Further, these values are larger than 3; hence,
it indicates a non-Gaussian \ac{pdf}, which in particular, has a larger tail.
In  Table \ref{table: pdfs metrics}, we also observe that for the theoretical and empirical \ac{pdf}s, the  skewness is zero, which implies that the \ac{pdf} is symmetric around its mean.

\begin{figure*}
\begin{align}\label{dpos}
\hspace{-2em}{\bf f}_{\bf{Y}}({\bf{y}} | {\rm{H}}_d)& = 
\prod_{k=0}^{n_{\rm{s}}-d-1}\tilde{f}_{Y[k+d]}
\big{(}y[k]|{\rm{H}}_0\big{)}
\prod_{m=1}^{N-1} \Bigg{(} \prod_{u=0}^{n_{\rm{s}}-1}
\tilde{f}_{Y[u]}
\big{(}y[mn_{\rm{s}}+u-d]|{\rm{H}}_0
\big{)}\Bigg{)}\prod_{v=Nn_{\rm{s}}-d}^{Nn_{\rm{s}}-1}\tilde{f}_{Y[v-Nn_{\rm{s}}+d]}\big{(}y[v]|{\rm{H}}_0\big{)}
,\,\,\,\,\,\,\, \,\,\ d \ge 0 
\end{align}
\begin{align}\label{dpos1}
&{\bf f}_{\bf{Y}}({\bf{y}} | {\rm{H}}_d) =
\prod_{k=0}^{|d|-1}\tilde{f}_{Y[-]}\big{(}y[k]|{\rm{H}}_0\big{)}
\prod_{m=0}^{N-2} \Bigg{(} 
\prod_{u=0}^{n_{\rm{s}}-1}\tilde{f}_{Y[u]}
\big{(}y[mn_{\rm{s}}+u-d]|{\rm{H}}_0\big{)}
\Bigg{)}
\prod_{u=(N-1)n_{\rm{s}}-d}^{Nn_{\rm{s}}-1}
\tilde{f}_{Y[u-(N-1)n_{\rm{s}}+d]}\big{(}y[u]|{\rm{H}}_0\big{)},\,\,\,\,\,\,\,\,\,\,\ d<0
\end{align}
\end{figure*}
\subsection{\ac{pdf} Derivation for Delay Hypothesis ${\rm H}_d$, $d \neq 0$ }
In order to obtain the \ac{ml} estimator, we need to derive the joint \ac{pdf} of the received samples given all delay hypotheses ${\rm{H}}_{d}$, $d\in \{-n_{\rm{s}}+1,\dots,-1,0,1,\dots,$ $n_{\rm{s}}-1\}$.
 In Theorem \ref{theo: pdf y}, we derived the \ac{pdf} of the received samples given hypothesis ${\rm{H}}_{0}$, i.e., $f_{{\bf Y}}({\bf y}|{\rm{H}}_0)$. In Appendix \ref{proof: theo final}, we prove that 
$f_{{\bf Y}}({\bf y}|{\rm{H}}_d)$ can be expressed based on the joint \ac{pdf} of the received samples given ${\rm{H}}_{0}$ as it is shown in \eqref{dpos} and \eqref{dpos1} at the top of this page, where  
\begin{align}\label{eq:29}
 \tilde{f}_{Y[m]}( y| {\rm{H}}_0 ) &\triangleq 
 {f}_{Y[nn_{\rm{s}}+m]}( y| {\rm{H}}_0 )=f_{Y_n[m]}( y| {\rm{H}}_0)
 \\ \nonumber 
& \approx f_{Y_{n_{\rm{I}}}[m]} \big( y_{\rm{I}} | {\rm{H}}_0 \big) f_{Y_{n_{\rm{Q}} }[m]} \big( y_{\rm{Q}} | {\rm{H}}_0 \big) \\ \nonumber 
&  \triangleq f_{Y_{{\rm{I}}}[m]} \big( y_{\rm{I}} | {\rm{H}}_0 \big)f_{Y_{{\rm{Q}}}[m]} \big( y_{\rm{Q}} | {\rm{H}}_0 \big),
\,\,\,\,\,\,\,\,\,\,\,\ n \ge 0
\end{align}
for $0 \le m \le n_{\rm{s}}-1$, and
\begin{align}\label{eq:30}
\tilde{f}_{Y[-]} (y|{\rm{H}}_0) & \triangleq {f}_{Y[nn_{\rm{s}}+m]}( y| {\rm{H}}_0 )=f_{Y_n[m]}( y| {\rm{H}}_0),
\\ \nonumber  
& \approx f_{Y_{n_{\rm{I}}}[m]}(y_{\rm{I}}|{\rm{H}}_0) f_{Y_{n_{\rm{Q}}}[m]}(y_{\rm{Q}}|{\rm{H}}_0) \\ \nonumber 
& \triangleq f_{Y_{{\rm{I}}}[-]} \big( y_{\rm{I}} | {\rm{H}}_0 \big)f_{Y_{{\rm{Q}}}[-]} \big( y_{\rm{Q}} | {\rm{H}}_0 \big),
\,\,\,\,\,\,\,\,\,\,\ n<0,
\end{align}
where $y \triangleq y_{\rm{I}}+iy_{\rm{Q}}$, $f_{Y[nn_{\rm{s}}+m]}(\cdot | {\rm{H}_0})$ is the \ac{pdf} of the received sample $y[nn_{\rm{s}}+m]  \triangleq y_{n}[m]$ given ${\rm{H}_0}$, and the \ac{pdf} of $\Re\{y[nn_{\rm{s}}+m]\}={y_{n_{\rm{I}}}}[m]$, i.e., $f_{Y_{n_{\rm{I}}}[m]}(\cdot|{\rm{H}}_0)$ and the \ac{pdf} of $\Im\{y[nn_{\rm{s}}+m]\}={y_{n_{\rm{Q}}}}[m]$, i.e., $f_{Y_{n_{\rm{Q}}}[m]}(\cdot|{\rm{H}}_0)$ are given in Theorem \ref{theo: pdf y}.
The relation between the \ac{pdf} of the received samples given ${\rm{H}}_d$ and ${\rm{H}}_0$ is attributed to the 
periodicity of the zero-padded guard interval.

To visualize \eqref{dpos} and \eqref{dpos1} of the revised manuscript,
let us consider the vector of PDF in  \eqref{eq: pdf matrix H0} in the revised manuscript.
We consider the first shown dashed line from the top in the PDF vector as reference line. The elements below this reference line in PDF vector are periodic with period $n_{\rm{s}}$  (see the pattern in Fig. \ref{fig: conv}  in the revised manuscript). The elements above this reference line represent the PDF of the noise samples. For $d\ge0$, the elements of the observation vector $\bf{y}$ are respectively
substituted in the PDF vector starting from the $(d+1)$-th element below the reference  line. This results in  \eqref{dpos}. Similarly, for $d<0$, the elements of the observation vector $\bf{y}$ are respectively substituted in the PDF vector starting $|d|$ elements above the reference  line. This results in \eqref{dpos1}.

\subsection{\ac{ml} TO Estimator}
The ML estimation for TO is defined to be the value of $d$ that maximizes $f_{{\bf Y}}({\bf y}|{\rm{H}}_d)$ for ${\bf y}$ fixed, i.e., the value that maximizes the likelihood function. The maximization is performed over the allowable range of $d$. Corollary \ref{theo: final} summarizes the proposed  NDA-ML TO estimation for ZP-OFDM. 

\begin{corollary} \label{theo: final}
For a \ac{zp}-OFDM system in a doubly-selective fading channel with the received vector $\bf{y}$ in \eqref{eq:14}, the NDA-\ac{ml} \ac{to} estimator is given by
\begin{equation}\label{8989ioio}
\hat{d}^{\text{opt}}=  \operatorname*{argmax}_{d \in \{-n_{\rm{s}}+1,..., n_{\rm{s}}-1\}} {\bf f}_{\bf{Y}}({\bf{y}} | {\rm{H}}_d), 
\end{equation}
\noindent where ${\bf f}_{\bf{Y}}({\bf{y}} | {\rm{H}}_d)$ is given in \eqref{dpos} and \eqref{dpos1}. 
\end{corollary}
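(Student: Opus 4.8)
\emph{Proof proposal.} The statement is an assembly of results already established, so the plan is to chain them in the right order. First, I would invoke the maximum-likelihood principle: by definition the ML estimate of the integer TO is the hypothesis index $d$ that maximizes the likelihood $f_{\bf Y}({\bf y}\,|\,{\rm H}_d)$ for the observed $\bf y$, and the admissible search set is precisely the hypothesis range $\{-n_{\rm s}+1,\dots,n_{\rm s}-1\}$ fixed in Section~\ref{sec: ml estimator} (both signs included so that frame and symbol synchronization are covered). Hence \eqref{8989ioio} follows as soon as one exhibits $f_{\bf Y}({\bf y}\,|\,{\rm H}_d)$ in the claimed closed form.

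Second, I would reduce the joint likelihood to a product of scalar PDFs. Starting from the chain-rule identity \eqref{bnbnm}, Theorem~\ref{uuonkml} gives that the entries of $\bf y$ are pairwise uncorrelated (and the in-phase and quadrature parts of each entry are mutually uncorrelated as well); the independence approximation justified at \eqref{eq: indepency 1} then collapses \eqref{bnbnm} into the double product $\prod_{n}\prod_{m} f_{Y_{n_{\rm I}}[m]}(\cdot\,|\,{\rm H}_d)\,f_{Y_{n_{\rm Q}}[m]}(\cdot\,|\,{\rm H}_d)$. For $d=0$ the scalar factors are exactly the closed-form expressions of Theorem~\ref{theo: pdf y}: the flat Gaussian \eqref{uioooppioio} on the noise-only (zero-padding / pre-packet) positions and the heavier-tailed expression \eqref{eq: pdf} on the data-plus-channel-tail positions, with the support split governed by \eqref{eq: a b no isi}.

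Third, I would pass from ${\rm H}_0$ to a general ${\rm H}_d$ by the index-shift argument carried out in Appendix~\ref{proof: theo final}: since ${\bf s}_n$ vanishes on its last $n_{\rm z}$ samples, the vector of per-sample PDFs is periodic with period $n_{\rm s}$ below the data onset and equals the flat Gaussian above it, so a timing shift by $d$ samples merely re-indexes which scalar PDF is evaluated at which received sample. Writing $\tilde f_{Y[m]}(\cdot\,|\,{\rm H}_0)$ and $\tilde f_{Y[-]}(\cdot\,|\,{\rm H}_0)$ as in \eqref{eq:29}--\eqref{eq:30}, this re-indexing produces exactly \eqref{dpos} for $d\ge 0$ and \eqref{dpos1} for $d<0$, the leading and trailing partial products accounting for the first and last observation blocks, where only part of the shifted window overlaps the periodic region. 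Substituting this expression into the argmax of the first step yields the claim.

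The part I expect to be most delicate is not conceptual but bookkeeping: getting the summation limits of the three product blocks in \eqref{dpos}--\eqref{dpos1} correct for every $d$ in the range, in particular the boundary cases with $|d|$ close to $n_{\rm s}$ and the treatment of the $|d|$ leading samples which, under a negative shift, fall entirely in the pre-packet noise region (the $\tilde f_{Y[-]}$ factors). Since all of this is handled in Appendix~\ref{proof: theo final}, at the level of the Corollary the argument reduces to quoting that result together with the ML definition.
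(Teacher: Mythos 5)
Your proposal is correct and follows essentially the same route as the paper: the corollary is just the ML definition (argmax of the likelihood over the admissible hypothesis set) applied to the joint PDF assembled from the chain rule \eqref{bnbnm}, Theorem~\ref{uuonkml} with the independence approximation \eqref{eq: indepency 1}, Theorem~\ref{theo: pdf y} for ${\rm H}_0$, and the index-shift construction of Appendix~\ref{proof: theo final} yielding \eqref{dpos}--\eqref{dpos1}. The paper offers no separate proof beyond this assembly, so your chaining of the cited results is exactly what is intended.
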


The proposed time synchronization method can be extended to ZP-OFDM with non-rectangular pulse shaping, but it requires the modification of the PDF in equation (20). In this case,
the $T_{\rm sa}$-spaced  baseband  OFDM  samples are modeled as independent random variables with different variances, which makes the derivation of the PDF challenging.

Since a closed-form expression cannot be found for the ML estimator in \eqref{8989ioio}, a numerical approach can be used. Numerical methods employ either an exhaustive 
search or an iterative maximization of the likelihood function.

\begin{table}[t!]
\vspace{-1em}
\centering 
 \caption{Statistical Analysis}
\label{table: pdfs metrics}
\resizebox{0.45\textwidth}{!}{
\begin{tabularx}{0.36\textwidth}{lccc}
\toprule
Metric & Empirical    & Analytical & Gaussian \\ 
\midrule
Mean   & 4.6250 $\times 10^{-4}$      &     0      & 0  \\ 
Variance & 0.3206        & 0.3205       & 0.3206  \\ 
Skewness       & 0.0031       & 0          & 0  \\ 
Kurtosis        & 4.5315      &  4.5653         & 3   \\ 
\bottomrule
\end{tabularx} }
\vspace{-1em}
\end{table}

\section{Low-Complexity Implementation}
\label{sec: importance samp} 

The derived \ac{pdf} in \eqref{eq: pdf} is complex due to the integral terms including the Gaussian error function $\Phi(\cdot)$. Hence, practical implementation of the proposed \ac{ml} can be challenging. 
An alternative approach with feasible implementation
and lower complexity is to employ MCS techniques to approximate the joint \ac{pdf} of the received samples.
MCS methods benefit from the availability of computer generated random variables to approximate univariate and multidimensional integrals in Bayesian estimation, inference, and optimization problems.    
The key idea behind MCS method is to generate independent random samples from a \ac{pdf} usually known up to a normalizing constant.  In the following discussion, we use MCS integration method in order to
make efficient implementation of the proposed theoretical  
 NDA-\ac{ml} estimator in \eqref{8989ioio} possible.

\subsection{MCS Method}
In Appendix \ref{proof: theo}, we proved that ${f}_{Y_{{\rm{I}}}[m]}(y_{\rm{I}} | {\rm{H}}_0)$ in \eqref{eq: pdf} is expressed in an integral form as follows
\begin{equation} \label{eq: importance sampling}
\begin{split}
&f_{Y_{n_{\rm{I}}}[m]}(y_{\rm{I}} | {\rm{H}}_0) = \int_{-\infty}^{\infty} f_{W_{n_{\rm{I}}}[m]}(y_{\rm{I}}-v) f_{V_{n_{\rm{I}}}[m]}(v | {\rm{H}}_0) dv \\ 
&= \int^{\infty}_{-\infty}  \frac{1}{\sqrt{\pi \sigma^2_{\rm{w}}}} \exp\bigg\{-\frac{1}{\sigma^2_{\rm{w}}}   \big(y_{\rm{I}} - v\big)^2 \bigg\} f_{V_{n_{\rm{I}}}[m]} (v) dv,
\end{split}
\end{equation}
where
\begin{align} \nonumber
f_{V_{n_{\rm{I}}}[m]}(v | \rm{H}_0)& =    \Bigg( \prod_{i=a}^{b} \lambda_i \Bigg)^2 \sum_{j=a}^{b}  \sum_{n=a}^{b} \frac{ 1}{ \prod_{k=a, k \neq j}^{b} (\lambda_k - \lambda_j)} \\ \label{pdf_v0}
&~ \times\frac{ 1 }{ \prod_{p=a, p \neq j}^{b} (\lambda_p - \lambda_n)} \frac{e^{-\lambda_j |v|}}{\lambda_j + \lambda_n},
\end{align}
and $f_{W_{n_{\rm{I}}}[m]} (w)$ is the \ac{pdf} of the white Gaussian noise with variance $\sigma_{\rm{w}}^2/2$.  Generating samples from random variable $V_{n_{\rm{I}}}[m]$ with \ac{pdf} in
\eqref{pdf_v0} is straightforward since it is expressed as a linear function of independent exponentially distributed random variables with  rate parameter $\lambda_k=(\sigma_{{\rm{h}}_k} \sigma_{\rm{x}}/2)^{-1}$ as shown in \eqref{eq: decomposee} of Appendix \ref{proof: theo}.

By using Monte Carlo integration method, we can write 
\begin{align}\label{eq: monte c}
\hspace{-1em}
f_{Y_{n_{\rm{I}}}[m]}(y_{\rm{I}} | {\rm{H}}_0) \simeq \frac{1}{L}\sum^{L-1}_{\ell=0}   \frac{1}{\sqrt{\pi \sigma^2_{\rm{w}}}} \exp\bigg\{-\frac{1}{\sigma^2_{\rm{w}}}  \big(y_{\rm{I}} -  v_\ell\big)^2 \bigg\}, \end{align}
 where $L$ is the number of  Monte  Carlo samples, and  $\{v_0,v_1, \dots,v_{L-1}\}$ are \ac{iid} samples drawn from $V_{n_{\rm{I}}}[m]$. 
 By applying Monte Carlo integration to the marginal \ac{pdf}s of the in-phase and quadrature components of  
 ${\bf{y}}$, the joint \ac{pdf} is given by 
\begin{align}\nonumber 
&    f_{\bf{Y}}( {\bf y}| {\rm{H}}_{0})  
\hspace{-0.11em} \approx \hspace{-0.21em}\prod_{n=0}^{N-1}
\prod_{m=0}^{n_{{\rm{s}}}-1} \hspace{-0.11em} f_{Y_{n_{\rm{I}}}[m]}(y_{n_{{\rm{I}}}}[m] |  {\rm{H}}_{0}) f_{Y_{n_{\rm{Q}}}[m]}(y_{n_{{\rm{Q}}}}[m] |  {\rm{H}}_{0}) \\ \nonumber 
& \simeq \frac{1}{({\pi \sigma^2_{\rm{w}}})^{Nn_{\rm{s}}}}  \prod_{n=0}^{N-1}
\prod_{m=0}^{n_{{\rm{s}}}-1} 
 \Bigg{(} \sum^{L}_{\ell=1}    \exp\Bigg\{\frac{-1}{\sigma^2_{\rm{w}}}  \big(y_{n_{{\rm{I}}}}[m] -  v_{{n_{{\rm{I}}}}}^m[\ell]\big)^2 \bigg\} \\ \label{eq: monte c}
& \hspace{0.5cm}\times 
 \sum^{L}_{\ell=1}    \exp\Bigg\{\frac{-1}{\sigma^2_{\rm{w}}}  \big(y_{n_{{\rm{Q}}}}[m] -  v_{{n_{{\rm{Q}}}}}^m[\ell]\big)^2 \bigg\}\Bigg{)}, 
\end{align}
where $v_{{n_{{\rm{I}}}}}^m[\ell]$ and $v_{{n_{{\rm{Q}}}}}^m[\ell]$
are \ac{iid} values drawn from random variables with \ac{pdf}s $f_{V_{n_{\rm{I}}}[m]} (v)$ and $f_{V_{n_{\rm{Q}}}[m]} (v)$, respectively. 
By using \eqref{eq: monte c}, we can design the MCS implementation of the theoretical \ac{nda}-ML \ac{to} estimator in \eqref{8989ioio} as in   
Fig. \ref{fig: gamma block}.

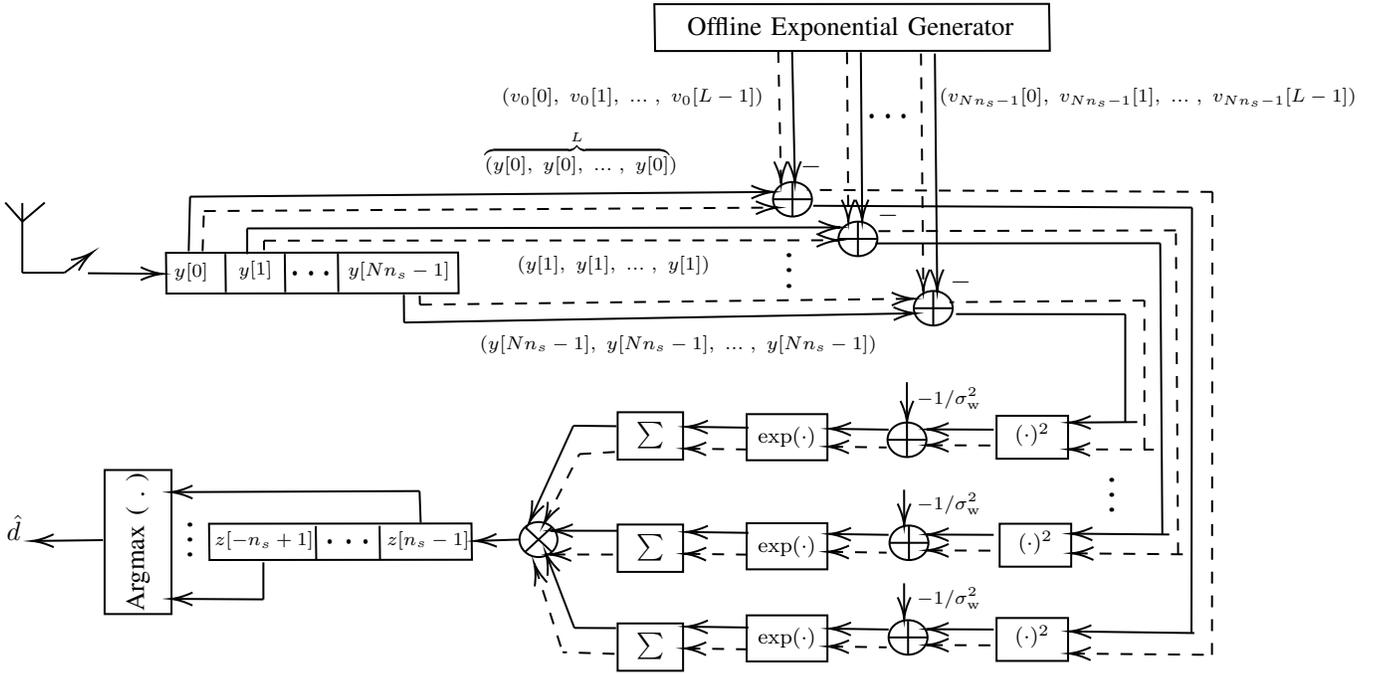
\begin{figure*}[t]
\centering

\tikzset{every picture/.style={line width=0.75pt}} 

\begin{tikzpicture}[x=0.75pt,y=0.75pt,yscale=-1,xscale=1]

\draw  [line width=0.75]  (106.48,333.63) -- (106.25,315.32) -- (238.62,315.18) -- (238.86,333.49) -- cycle ;
\draw [line width=0.75]    (160.33,333.74) -- (160.1,316.48) ;
\draw [line width=0.75]    (192.49,333.64) -- (192.41,327.22) -- (192.27,316.38) ;
\draw  [line width=0.75]  (377.41,260) -- (418.12,260) -- (418.12,283.27) -- (377.41,283.27) -- cycle ;
\draw  [line width=0.75]  (312.33,259.04) -- (345.34,259.04) -- (345.34,282.94) -- (312.33,282.94) -- cycle ;
\draw  [line width=0.75]  (312.33,315.72) -- (345.34,315.72) -- (345.34,339.62) -- (312.33,339.62) -- cycle ;
\draw  [line width=0.75]  (312.33,365.63) -- (345.34,365.63) -- (345.34,389.53) -- (312.33,389.53) -- cycle ;
\draw [line width=0.75]    (348.86,266.41) -- (378.33,266.83) ;
\draw [shift={(346.86,266.38)}, rotate = 0.81] [color={rgb, 255:red, 0; green, 0; blue, 0 }  ][line width=0.75]    (10.93,-3.29) .. controls (6.95,-1.4) and (3.31,-0.3) .. (0,0) .. controls (3.31,0.3) and (6.95,1.4) .. (10.93,3.29)   ;
\draw  [line width=0.75]  (377.41,315.06) -- (418.12,315.06) -- (418.12,338.33) -- (377.41,338.33) -- cycle ;
\draw [line width=0.75]    (348.86,319.5) -- (378.33,319.92) ;
\draw [shift={(346.86,319.47)}, rotate = 0.81] [color={rgb, 255:red, 0; green, 0; blue, 0 }  ][line width=0.75]    (10.93,-3.29) .. controls (6.95,-1.4) and (3.31,-0.3) .. (0,0) .. controls (3.31,0.3) and (6.95,1.4) .. (10.93,3.29)   ;
\draw  [line width=0.75]  (377.41,361.56) -- (418.12,361.56) -- (418.12,384.83) -- (377.41,384.83) -- cycle ;
\draw [line width=0.75]    (348.86,367.28) -- (378.33,367.7) ;
\draw [shift={(346.86,367.25)}, rotate = 0.81] [color={rgb, 255:red, 0; green, 0; blue, 0 }  ][line width=0.75]    (10.93,-3.29) .. controls (6.95,-1.4) and (3.31,-0.3) .. (0,0) .. controls (3.31,0.3) and (6.95,1.4) .. (10.93,3.29)   ;
\draw  [line width=0.75]  (262.84,323.29) .. controls (262.84,318.41) and (267.13,314.45) .. (272.42,314.45) .. controls (277.71,314.45) and (282,318.41) .. (282,323.29) .. controls (282,328.18) and (277.71,332.14) .. (272.42,332.14) .. controls (267.13,332.14) and (262.84,328.18) .. (262.84,323.29) -- cycle ; \draw  [line width=0.75]  (265.65,317.04) -- (279.2,329.55) ; \draw  [line width=0.75]  (279.2,317.04) -- (265.65,329.55) ;
\draw [line width=0.75]    (268.5,313.01) -- (290.02,265.85) ;
\draw [shift={(267.67,314.83)}, rotate = 294.53] [color={rgb, 255:red, 0; green, 0; blue, 0 }  ][line width=0.75]    (10.93,-3.29) .. controls (6.95,-1.4) and (3.31,-0.3) .. (0,0) .. controls (3.31,0.3) and (6.95,1.4) .. (10.93,3.29)   ;
\draw [line width=0.75]    (311.71,266.09) -- (290.02,265.85) ;
\draw [line width=0.75]    (311.54,368) -- (290.51,367.9) ;
\draw [line width=0.75]    (276.84,332.04) -- (290.51,367.9) ;
\draw [shift={(276.12,330.17)}, rotate = 69.14] [color={rgb, 255:red, 0; green, 0; blue, 0 }  ][line width=0.75]    (10.93,-3.29) .. controls (6.95,-1.4) and (3.31,-0.3) .. (0,0) .. controls (3.31,0.3) and (6.95,1.4) .. (10.93,3.29)   ;
\draw [line width=0.75]    (280.77,318.84) -- (312.33,318.88) ;
\draw [shift={(278.77,318.83)}, rotate = 0.09] [color={rgb, 255:red, 0; green, 0; blue, 0 }  ][line width=0.75]    (10.93,-3.29) .. controls (6.95,-1.4) and (3.31,-0.3) .. (0,0) .. controls (3.31,0.3) and (6.95,1.4) .. (10.93,3.29)   ;
\draw [line width=0.75]    (458.26,262.24) -- (458.23,243.89) ;
\draw [shift={(458.27,264.24)}, rotate = 269.89] [color={rgb, 255:red, 0; green, 0; blue, 0 }  ][line width=0.75]    (10.93,-3.29) .. controls (6.95,-1.4) and (3.31,-0.3) .. (0,0) .. controls (3.31,0.3) and (6.95,1.4) .. (10.93,3.29)   ;
\draw [line width=0.75]    (421.06,267.41) -- (449,267.83) ;
\draw [shift={(419.06,267.38)}, rotate = 0.85] [color={rgb, 255:red, 0; green, 0; blue, 0 }  ][line width=0.75]    (10.93,-3.29) .. controls (6.95,-1.4) and (3.31,-0.3) .. (0,0) .. controls (3.31,0.3) and (6.95,1.4) .. (10.93,3.29)   ;
\draw [line width=0.75]    (421.06,320.5) -- (449,320.92) ;
\draw [shift={(419.06,320.47)}, rotate = 0.85] [color={rgb, 255:red, 0; green, 0; blue, 0 }  ][line width=0.75]    (10.93,-3.29) .. controls (6.95,-1.4) and (3.31,-0.3) .. (0,0) .. controls (3.31,0.3) and (6.95,1.4) .. (10.93,3.29)   ;
\draw [line width=0.75]    (421.06,368.28) -- (449,368.7) ;
\draw [shift={(419.06,368.25)}, rotate = 0.85] [color={rgb, 255:red, 0; green, 0; blue, 0 }  ][line width=0.75]    (10.93,-3.29) .. controls (6.95,-1.4) and (3.31,-0.3) .. (0,0) .. controls (3.31,0.3) and (6.95,1.4) .. (10.93,3.29)   ;

\draw [line width=0.75]    (456.84,314.15) -- (456.81,298.17) ;
\draw [shift={(456.84,316.15)}, rotate = 269.89] [color={rgb, 255:red, 0; green, 0; blue, 0 }  ][line width=0.75]    (10.93,-3.29) .. controls (6.95,-1.4) and (3.31,-0.3) .. (0,0) .. controls (3.31,0.3) and (6.95,1.4) .. (10.93,3.29)   ;
\draw [line width=0.75]    (456.84,361.36) -- (456.81,345.38) ;
\draw [shift={(456.84,363.36)}, rotate = 269.89] [color={rgb, 255:red, 0; green, 0; blue, 0 }  ][line width=0.75]    (10.93,-3.29) .. controls (6.95,-1.4) and (3.31,-0.3) .. (0,0) .. controls (3.31,0.3) and (6.95,1.4) .. (10.93,3.29)   ;
\draw [line width=0.75]    (242.37,323.97) -- (262.84,323.29) ;
\draw [shift={(240.37,324.03)}, rotate = 358.12] [color={rgb, 255:red, 0; green, 0; blue, 0 }  ][line width=0.75]    (10.93,-3.29) .. controls (6.95,-1.4) and (3.31,-0.3) .. (0,0) .. controls (3.31,0.3) and (6.95,1.4) .. (10.93,3.29)   ;
\draw  [line width=0.75]  (53.6,288.2) -- (87.31,288.2) -- (87.31,360.9) -- (53.6,360.9) -- cycle ;

\draw  [line width=0.75]  (84.79,199.13) -- (84.54,178.58) -- (231.88,178.43) -- (232.13,198.98) -- cycle ;
\draw [line width=0.75]    (114.72,198.51) -- (114.49,179.14) ;
\draw [line width=0.75]    (144.7,198.48) -- (144.46,179.11) ;
\draw [line width=0.75]    (171.49,198.41) -- (171.4,191.21) -- (171.25,179.04) ;
\draw [line width=0.75]    (12.1,153.49) -- (12.1,188.64) ;
\draw [line width=0.75]    (45.15,188.89) -- (80.73,189.17) ;
\draw [shift={(82.73,189.19)}, rotate = 180.45] [color={rgb, 255:red, 0; green, 0; blue, 0 }  ][line width=0.75]    (10.93,-3.29) .. controls (6.95,-1.4) and (3.31,-0.3) .. (0,0) .. controls (3.31,0.3) and (6.95,1.4) .. (10.93,3.29)   ;
\draw [line width=0.75]    (12.1,162.85) -- (23.42,153.2) ;
\draw [line width=0.75]    (12.1,162.85) -- (3.5,153.2) ;
\draw [line width=0.75]    (12.1,188.64) -- (33.38,188.6) ;
\draw [line width=0.75]    (33.38,188.6) -- (45.8,179.54) ;
\draw [shift={(47.42,178.36)}, rotate = 503.89] [color={rgb, 255:red, 0; green, 0; blue, 0 }  ][line width=0.75]    (10.93,-3.29) .. controls (6.95,-1.4) and (3.31,-0.3) .. (0,0) .. controls (3.31,0.3) and (6.95,1.4) .. (10.93,3.29)   ;
\draw [line width=0.75]    (204.63,213.75) -- (204.5,199.72) ;
\draw [line width=0.75]    (204.63,213.75) -- (460.46,209.13) ;
\draw [shift={(462.46,209.09)}, rotate = 538.97] [color={rgb, 255:red, 0; green, 0; blue, 0 }  ][line width=0.75]    (10.93,-3.29) .. controls (6.95,-1.4) and (3.31,-0.3) .. (0,0) .. controls (3.31,0.3) and (6.95,1.4) .. (10.93,3.29)   ;

\draw [line width=0.75]    (96,177.75) -- (96.69,150.24) ;
\draw [line width=0.75]    (96.69,150.24) -- (388.7,147.86) ;
\draw [shift={(390.7,147.84)}, rotate = 539.53] [color={rgb, 255:red, 0; green, 0; blue, 0 }  ][line width=0.75]    (10.93,-3.29) .. controls (6.95,-1.4) and (3.31,-0.3) .. (0,0) .. controls (3.31,0.3) and (6.95,1.4) .. (10.93,3.29)   ;

\draw [line width=0.75]    (125.5,179.24) -- (125.5,166.26) ;
\draw [line width=0.75]    (125.5,166.26) -- (422.63,165.75) ;
\draw [shift={(424.63,165.75)}, rotate = 539.9] [color={rgb, 255:red, 0; green, 0; blue, 0 }  ][line width=0.75]    (10.93,-3.29) .. controls (6.95,-1.4) and (3.31,-0.3) .. (0,0) .. controls (3.31,0.3) and (6.95,1.4) .. (10.93,3.29)   ;

\draw  [line width=0.75]  (331.24,76.73) -- (330.85,53.07) -- (529.92,52.93) -- (530.3,76.59) -- cycle ;
\draw [line width=0.75]    (400.33,76.67) -- (401.71,141.44) ;
\draw [shift={(401.76,143.44)}, rotate = 268.78] [color={rgb, 255:red, 0; green, 0; blue, 0 }  ][line width=0.75]    (10.93,-3.29) .. controls (6.95,-1.4) and (3.31,-0.3) .. (0,0) .. controls (3.31,0.3) and (6.95,1.4) .. (10.93,3.29)   ;
\draw [line width=0.75]    (435,77.33) -- (435.67,160.71) ;
\draw [shift={(435.69,162.71)}, rotate = 269.54] [color={rgb, 255:red, 0; green, 0; blue, 0 }  ][line width=0.75]    (10.93,-3.29) .. controls (6.95,-1.4) and (3.31,-0.3) .. (0,0) .. controls (3.31,0.3) and (6.95,1.4) .. (10.93,3.29)   ;
\draw [line width=0.75]    (472.33,78) -- (473.5,195.97) ;
\draw [shift={(473.52,197.97)}, rotate = 269.43] [color={rgb, 255:red, 0; green, 0; blue, 0 }  ][line width=0.75]    (10.93,-3.29) .. controls (6.95,-1.4) and (3.31,-0.3) .. (0,0) .. controls (3.31,0.3) and (6.95,1.4) .. (10.93,3.29)   ;
\draw [line width=0.75]    (601.75,370.5) -- (602,155.74) ;
\draw [line width=0.75]    (602,155.74) -- (411,154.83) ;
\draw [line width=0.75]    (585.98,173.75) -- (442.33,173.83) ;
\draw [line width=0.75]    (587.05,320.75) -- (585.98,173.75) ;
\draw [line width=0.75]    (568.35,209.58) -- (483,209.5) ;
\draw [line width=0.75]    (568.35,264) -- (568.35,209.58) ;
\draw [line width=0.75]    (541.96,264.24) -- (574.42,264) ;
\draw [shift={(539.96,264.25)}, rotate = 359.59] [color={rgb, 255:red, 0; green, 0; blue, 0 }  ][line width=0.75]    (10.93,-3.29) .. controls (6.95,-1.4) and (3.31,-0.3) .. (0,0) .. controls (3.31,0.3) and (6.95,1.4) .. (10.93,3.29)   ;
\draw [line width=0.75]    (542.82,320.35) -- (590.53,320.75) ;
\draw [shift={(540.82,320.34)}, rotate = 0.48] [color={rgb, 255:red, 0; green, 0; blue, 0 }  ][line width=0.75]    (10.93,-3.29) .. controls (6.95,-1.4) and (3.31,-0.3) .. (0,0) .. controls (3.31,0.3) and (6.95,1.4) .. (10.93,3.29)   ;
\draw [line width=0.75]    (542.28,369.68) -- (603.19,370.5) ;
\draw [shift={(540.28,369.65)}, rotate = 0.78] [color={rgb, 255:red, 0; green, 0; blue, 0 }  ][line width=0.75]    (10.93,-3.29) .. controls (6.95,-1.4) and (3.31,-0.3) .. (0,0) .. controls (3.31,0.3) and (6.95,1.4) .. (10.93,3.29)   ;

\draw  [line width=0.75]  (503.42,260.83) -- (539.51,260.83) -- (539.51,282.5) -- (503.42,282.5) -- cycle ;
\draw [line width=0.75]    (470.33,267.69) -- (502,267.83) ;
\draw [shift={(468.33,267.68)}, rotate = 0.25] [color={rgb, 255:red, 0; green, 0; blue, 0 }  ][line width=0.75]    (10.93,-3.29) .. controls (6.95,-1.4) and (3.31,-0.3) .. (0,0) .. controls (3.31,0.3) and (6.95,1.4) .. (10.93,3.29)   ;
\draw [line width=0.75]    (470.33,320.78) -- (502,320.92) ;
\draw [shift={(468.33,320.77)}, rotate = 0.25] [color={rgb, 255:red, 0; green, 0; blue, 0 }  ][line width=0.75]    (10.93,-3.29) .. controls (6.95,-1.4) and (3.31,-0.3) .. (0,0) .. controls (3.31,0.3) and (6.95,1.4) .. (10.93,3.29)   ;
\draw [line width=0.75]    (470.33,368.56) -- (502,368.7) ;
\draw [shift={(468.33,368.55)}, rotate = 0.25] [color={rgb, 255:red, 0; green, 0; blue, 0 }  ][line width=0.75]    (10.93,-3.29) .. controls (6.95,-1.4) and (3.31,-0.3) .. (0,0) .. controls (3.31,0.3) and (6.95,1.4) .. (10.93,3.29)   ;

\draw  [line width=0.75]  (504.99,315.32) -- (541.08,315.32) -- (541.08,336.99) -- (504.99,336.99) -- cycle ;
\draw  [line width=0.75]  (503.42,362.89) -- (539.51,362.89) -- (539.51,384.56) -- (503.42,384.56) -- cycle ;
\draw [line width=0.75]    (89,353.25) -- (133.67,353.5) ;
\draw [shift={(87,353.24)}, rotate = 0.32] [color={rgb, 255:red, 0; green, 0; blue, 0 }  ][line width=0.75]    (10.93,-3.29) .. controls (6.95,-1.4) and (3.31,-0.3) .. (0,0) .. controls (3.31,0.3) and (6.95,1.4) .. (10.93,3.29)   ;
\draw [line width=0.75]    (89,299.36) -- (212.42,298.83) ;
\draw [shift={(87,299.37)}, rotate = 359.76] [color={rgb, 255:red, 0; green, 0; blue, 0 }  ][line width=0.75]    (10.93,-3.29) .. controls (6.95,-1.4) and (3.31,-0.3) .. (0,0) .. controls (3.31,0.3) and (6.95,1.4) .. (10.93,3.29)   ;
\draw [line width=0.75]    (213,314.74) -- (212.42,298.83) ;
\draw [line width=0.75]    (133.67,353.5) -- (133.67,334.8) ;

\draw [line width=0.75]    (18.97,324.01) -- (52.76,323.58) ;
\draw [shift={(16.97,324.03)}, rotate = 359.27] [color={rgb, 255:red, 0; green, 0; blue, 0 }  ][line width=0.75]    (10.93,-3.29) .. controls (6.95,-1.4) and (3.31,-0.3) .. (0,0) .. controls (3.31,0.3) and (6.95,1.4) .. (10.93,3.29)   ;

\draw  [line width=0.75]  (390.7,151.54) .. controls (390.7,146.73) and (395.09,142.83) .. (400.51,142.83) .. controls (405.94,142.83) and (410.33,146.73) .. (410.33,151.54) .. controls (410.33,156.35) and (405.94,160.25) .. (400.51,160.25) .. controls (395.09,160.25) and (390.7,156.35) .. (390.7,151.54) -- cycle ; \draw  [line width=0.75]  (390.7,151.54) -- (410.33,151.54) ; \draw  [line width=0.75]  (400.51,142.83) -- (400.51,160.25) ;
\draw [line width=0.75]  [dash pattern={on 4.5pt off 4.5pt}]  (393.33,76.67) -- (394.71,141.44) ;
\draw [shift={(394.76,143.44)}, rotate = 268.78] [color={rgb, 255:red, 0; green, 0; blue, 0 }  ][line width=0.75]    (10.93,-3.29) .. controls (6.95,-1.4) and (3.31,-0.3) .. (0,0) .. controls (3.31,0.3) and (6.95,1.4) .. (10.93,3.29)   ;
\draw [line width=0.75]  [dash pattern={on 4.5pt off 4.5pt}]  (428,77.33) -- (428.67,160.71) ;
\draw [shift={(428.69,162.71)}, rotate = 269.54] [color={rgb, 255:red, 0; green, 0; blue, 0 }  ][line width=0.75]    (10.93,-3.29) .. controls (6.95,-1.4) and (3.31,-0.3) .. (0,0) .. controls (3.31,0.3) and (6.95,1.4) .. (10.93,3.29)   ;
\draw [line width=0.75]  [dash pattern={on 4.5pt off 4.5pt}]  (465.33,78) -- (466.5,195.97) ;
\draw [shift={(466.52,197.97)}, rotate = 269.43] [color={rgb, 255:red, 0; green, 0; blue, 0 }  ][line width=0.75]    (10.93,-3.29) .. controls (6.95,-1.4) and (3.31,-0.3) .. (0,0) .. controls (3.31,0.3) and (6.95,1.4) .. (10.93,3.29)   ;
\draw [line width=0.75]  [dash pattern={on 4.5pt off 4.5pt}]  (103,177.75) -- (103.68,157.51) ;
\draw [line width=0.75]  [dash pattern={on 4.5pt off 4.5pt}]  (103.68,157.51) -- (391,155.76) ;
\draw [shift={(393,155.75)}, rotate = 539.65] [color={rgb, 255:red, 0; green, 0; blue, 0 }  ][line width=0.75]    (10.93,-3.29) .. controls (6.95,-1.4) and (3.31,-0.3) .. (0,0) .. controls (3.31,0.3) and (6.95,1.4) .. (10.93,3.29)   ;
\draw [line width=0.75]  [dash pattern={on 4.5pt off 4.5pt}]  (134,177.25) -- (134,172.44) ;
\draw [line width=0.75]  [dash pattern={on 4.5pt off 4.5pt}]  (134,172.44) -- (421.63,172.25) ;
\draw [shift={(423.63,172.25)}, rotate = 539.96] [color={rgb, 255:red, 0; green, 0; blue, 0 }  ][line width=0.75]    (10.93,-3.29) .. controls (6.95,-1.4) and (3.31,-0.3) .. (0,0) .. controls (3.31,0.3) and (6.95,1.4) .. (10.93,3.29)   ;
\draw [line width=0.75]  [dash pattern={on 4.5pt off 4.5pt}]  (212.63,204.75) -- (212.5,200.22) ;
\draw [line width=0.75]  [dash pattern={on 4.5pt off 4.5pt}]  (212.63,204.75) -- (459.5,201.77) ;
\draw [shift={(461.5,201.75)}, rotate = 539.31] [color={rgb, 255:red, 0; green, 0; blue, 0 }  ][line width=0.75]    (10.93,-3.29) .. controls (6.95,-1.4) and (3.31,-0.3) .. (0,0) .. controls (3.31,0.3) and (6.95,1.4) .. (10.93,3.29)   ;
\draw [line width=0.75]  [dash pattern={on 4.5pt off 4.5pt}]  (578,202.65) -- (479.67,203.5) ;
\draw [line width=0.75]  [dash pattern={on 4.5pt off 4.5pt}]  (578,277.91) -- (578,202.65) ;
\draw [line width=0.75]  [dash pattern={on 4.5pt off 4.5pt}]  (541.67,277.76) -- (582.73,277.91) ;
\draw [shift={(539.67,277.75)}, rotate = 0.21] [color={rgb, 255:red, 0; green, 0; blue, 0 }  ][line width=0.75]    (10.93,-3.29) .. controls (6.95,-1.4) and (3.31,-0.3) .. (0,0) .. controls (3.31,0.3) and (6.95,1.4) .. (10.93,3.29)   ;
\draw [line width=0.75]  [dash pattern={on 4.5pt off 4.5pt}]  (593.88,167.25) -- (441.25,167.65) ;
\draw [line width=0.75]  [dash pattern={on 4.5pt off 4.5pt}]  (595,330.75) -- (593.88,167.25) ;
\draw [line width=0.75]  [dash pattern={on 4.5pt off 4.5pt}]  (543.39,330.27) -- (597.37,330.75) ;
\draw [shift={(541.39,330.25)}, rotate = 0.51] [color={rgb, 255:red, 0; green, 0; blue, 0 }  ][line width=0.75]    (10.93,-3.29) .. controls (6.95,-1.4) and (3.31,-0.3) .. (0,0) .. controls (3.31,0.3) and (6.95,1.4) .. (10.93,3.29)   ;
\draw [line width=0.75]  [dash pattern={on 4.5pt off 4.5pt}]  (612.08,381.5) -- (612.33,147.82) ;
\draw [line width=0.75]  [dash pattern={on 4.5pt off 4.5pt}]  (612.33,147.82) -- (407.82,146.83) ;
\draw [line width=0.75]  [dash pattern={on 4.5pt off 4.5pt}]  (541.95,380.85) -- (612.08,381.5) ;
\draw [shift={(539.95,380.83)}, rotate = 0.53] [color={rgb, 255:red, 0; green, 0; blue, 0 }  ][line width=0.75]    (10.93,-3.29) .. controls (6.95,-1.4) and (3.31,-0.3) .. (0,0) .. controls (3.31,0.3) and (6.95,1.4) .. (10.93,3.29)   ;
\draw [line width=0.75]  [dash pattern={on 4.5pt off 4.5pt}]  (470.33,275.69) -- (504.33,275.83) ;
\draw [shift={(468.33,275.68)}, rotate = 0.23] [color={rgb, 255:red, 0; green, 0; blue, 0 }  ][line width=0.75]    (10.93,-3.29) .. controls (6.95,-1.4) and (3.31,-0.3) .. (0,0) .. controls (3.31,0.3) and (6.95,1.4) .. (10.93,3.29)   ;
\draw [line width=0.75]  [dash pattern={on 4.5pt off 4.5pt}]  (470.33,328.78) -- (504.33,328.92) ;
\draw [shift={(468.33,328.77)}, rotate = 0.23] [color={rgb, 255:red, 0; green, 0; blue, 0 }  ][line width=0.75]    (10.93,-3.29) .. controls (6.95,-1.4) and (3.31,-0.3) .. (0,0) .. controls (3.31,0.3) and (6.95,1.4) .. (10.93,3.29)   ;
\draw [line width=0.75]  [dash pattern={on 4.5pt off 4.5pt}]  (470.33,376.56) -- (504.33,376.7) ;
\draw [shift={(468.33,376.55)}, rotate = 0.23] [color={rgb, 255:red, 0; green, 0; blue, 0 }  ][line width=0.75]    (10.93,-3.29) .. controls (6.95,-1.4) and (3.31,-0.3) .. (0,0) .. controls (3.31,0.3) and (6.95,1.4) .. (10.93,3.29)   ;

\draw [line width=0.75]  [dash pattern={on 4.5pt off 4.5pt}]  (420.06,276.41) -- (448,276.83) ;
\draw [shift={(418.06,276.38)}, rotate = 0.85] [color={rgb, 255:red, 0; green, 0; blue, 0 }  ][line width=0.75]    (10.93,-3.29) .. controls (6.95,-1.4) and (3.31,-0.3) .. (0,0) .. controls (3.31,0.3) and (6.95,1.4) .. (10.93,3.29)   ;
\draw [line width=0.75]  [dash pattern={on 4.5pt off 4.5pt}]  (420.06,329.5) -- (448,329.92) ;
\draw [shift={(418.06,329.47)}, rotate = 0.85] [color={rgb, 255:red, 0; green, 0; blue, 0 }  ][line width=0.75]    (10.93,-3.29) .. controls (6.95,-1.4) and (3.31,-0.3) .. (0,0) .. controls (3.31,0.3) and (6.95,1.4) .. (10.93,3.29)   ;
\draw [line width=0.75]  [dash pattern={on 4.5pt off 4.5pt}]  (420.06,377.28) -- (448,377.7) ;
\draw [shift={(418.06,377.25)}, rotate = 0.85] [color={rgb, 255:red, 0; green, 0; blue, 0 }  ][line width=0.75]    (10.93,-3.29) .. controls (6.95,-1.4) and (3.31,-0.3) .. (0,0) .. controls (3.31,0.3) and (6.95,1.4) .. (10.93,3.29)   ;

\draw [line width=0.75]  [dash pattern={on 4.5pt off 4.5pt}]  (348.03,277.41) -- (377.5,277.83) ;
\draw [shift={(346.03,277.38)}, rotate = 0.81] [color={rgb, 255:red, 0; green, 0; blue, 0 }  ][line width=0.75]    (10.93,-3.29) .. controls (6.95,-1.4) and (3.31,-0.3) .. (0,0) .. controls (3.31,0.3) and (6.95,1.4) .. (10.93,3.29)   ;
\draw [line width=0.75]  [dash pattern={on 4.5pt off 4.5pt}]  (348.03,330.5) -- (377.5,330.92) ;
\draw [shift={(346.03,330.47)}, rotate = 0.81] [color={rgb, 255:red, 0; green, 0; blue, 0 }  ][line width=0.75]    (10.93,-3.29) .. controls (6.95,-1.4) and (3.31,-0.3) .. (0,0) .. controls (3.31,0.3) and (6.95,1.4) .. (10.93,3.29)   ;
\draw [line width=0.75]  [dash pattern={on 4.5pt off 4.5pt}]  (348.03,378.28) -- (377.5,378.7) ;
\draw [shift={(346.03,378.25)}, rotate = 0.81] [color={rgb, 255:red, 0; green, 0; blue, 0 }  ][line width=0.75]    (10.93,-3.29) .. controls (6.95,-1.4) and (3.31,-0.3) .. (0,0) .. controls (3.31,0.3) and (6.95,1.4) .. (10.93,3.29)   ;
\draw [line width=0.75]  [dash pattern={on 4.5pt off 4.5pt}]  (275.97,313.72) -- (293.04,280.17) ;
\draw [shift={(275.07,315.5)}, rotate = 296.96] [color={rgb, 255:red, 0; green, 0; blue, 0 }  ][line width=0.75]    (10.93,-3.29) .. controls (6.95,-1.4) and (3.31,-0.3) .. (0,0) .. controls (3.31,0.3) and (6.95,1.4) .. (10.93,3.29)   ;
\draw [line width=0.75]  [dash pattern={on 4.5pt off 4.5pt}]  (310.92,279.09) -- (293.04,280.17) ;
\draw [line width=0.75]  [dash pattern={on 4.5pt off 4.5pt}]  (310.75,381) -- (284.58,380.17) ;
\draw [line width=0.75]  [dash pattern={on 4.5pt off 4.5pt}]  (270.92,337.41) -- (284.58,380.17) ;
\draw [shift={(270.31,335.5)}, rotate = 72.28] [color={rgb, 255:red, 0; green, 0; blue, 0 }  ][line width=0.75]    (10.93,-3.29) .. controls (6.95,-1.4) and (3.31,-0.3) .. (0,0) .. controls (3.31,0.3) and (6.95,1.4) .. (10.93,3.29)   ;
\draw [line width=0.75]  [dash pattern={on 4.5pt off 4.5pt}]  (279.97,330.84) -- (311.54,330.88) ;
\draw [shift={(277.97,330.83)}, rotate = 0.09] [color={rgb, 255:red, 0; green, 0; blue, 0 }  ][line width=0.75]    (10.93,-3.29) .. controls (6.95,-1.4) and (3.31,-0.3) .. (0,0) .. controls (3.31,0.3) and (6.95,1.4) .. (10.93,3.29)   ;
\draw  [line width=0.75]  (423.7,171.54) .. controls (423.7,166.73) and (428.09,162.83) .. (433.51,162.83) .. controls (438.94,162.83) and (443.33,166.73) .. (443.33,171.54) .. controls (443.33,176.35) and (438.94,180.25) .. (433.51,180.25) .. controls (428.09,180.25) and (423.7,176.35) .. (423.7,171.54) -- cycle ; \draw  [line width=0.75]  (423.7,171.54) -- (443.33,171.54) ; \draw  [line width=0.75]  (433.51,162.83) -- (433.51,180.25) ;
\draw  [line width=0.75]  (461.7,206.54) .. controls (461.7,201.73) and (466.09,197.83) .. (471.51,197.83) .. controls (476.94,197.83) and (481.33,201.73) .. (481.33,206.54) .. controls (481.33,211.35) and (476.94,215.25) .. (471.51,215.25) .. controls (466.09,215.25) and (461.7,211.35) .. (461.7,206.54) -- cycle ; \draw  [line width=0.75]  (461.7,206.54) -- (481.33,206.54) ; \draw  [line width=0.75]  (471.51,197.83) -- (471.51,215.25) ;
\draw  [line width=0.75]  (448.45,272.95) .. controls (448.45,268.14) and (452.85,264.24) .. (458.27,264.24) .. controls (463.69,264.24) and (468.09,268.14) .. (468.09,272.95) .. controls (468.09,277.76) and (463.69,281.66) .. (458.27,281.66) .. controls (452.85,281.66) and (448.45,277.76) .. (448.45,272.95) -- cycle ; \draw  [line width=0.75]  (448.45,272.95) -- (468.09,272.95) ; \draw  [line width=0.75]  (458.27,264.24) -- (458.27,281.66) ;
\draw  [line width=0.75]  (449.45,324.95) .. controls (449.45,320.14) and (453.85,316.24) .. (459.27,316.24) .. controls (464.69,316.24) and (469.09,320.14) .. (469.09,324.95) .. controls (469.09,329.76) and (464.69,333.66) .. (459.27,333.66) .. controls (453.85,333.66) and (449.45,329.76) .. (449.45,324.95) -- cycle ; \draw  [line width=0.75]  (449.45,324.95) -- (469.09,324.95) ; \draw  [line width=0.75]  (459.27,316.24) -- (459.27,333.66) ;
\draw  [line width=0.75]  (449,372.7) .. controls (449,367.89) and (453.4,363.99) .. (458.82,363.99) .. controls (464.24,363.99) and (468.64,367.89) .. (468.64,372.7) .. controls (468.64,377.51) and (464.24,381.41) .. (458.82,381.41) .. controls (453.4,381.41) and (449,377.51) .. (449,372.7) -- cycle ; \draw  [line width=0.75]  (449,372.7) -- (468.64,372.7) ; \draw  [line width=0.75]  (458.82,363.99) -- (458.82,381.41) ;

\draw (398.79,179.84) node  [font=\huge] [align=left] {{\fontfamily{pcr}\selectfont {\large .}}};
\draw (398.79,187.32) node  [font=\huge] [align=left] {{\fontfamily{pcr}\selectfont {\large .}}};
\draw (398.79,195.18) node  [font=\huge] [align=left] {{\fontfamily{pcr}\selectfont {\large .}}};
\draw (561.54,308.44) node  [font=\huge] [align=left] {{\fontfamily{pcr}\selectfont {\large .}}};
\draw (561.54,300.58) node  [font=\huge] [align=left] {{\fontfamily{pcr}\selectfont {\large .}}};
\draw (561.54,293.1) node  [font=\huge] [align=left] {{\fontfamily{pcr}\selectfont {\large .}}};
\draw (521.47,270.32) node  [font=\footnotesize]  {$( \cdot )^{2}$};
\draw (523.03,324.81) node  [font=\footnotesize]  {$( \cdot )^{2}$};
\draw (521.47,372.38) node  [font=\footnotesize]  {$( \cdot )^{2}$};
\draw (185.32,324.26) node  [font=\huge,rotate=-269.79] [align=left] {{\fontfamily{pcr}\selectfont {\large .}}};
\draw (176.72,324.26) node  [font=\huge,rotate=-269.79] [align=left] {{\fontfamily{pcr}\selectfont {\large .}}};
\draw (167.26,324.26) node  [font=\huge,rotate=-269.79] [align=left] {{\fontfamily{pcr}\selectfont {\large .}}};
\draw (328.75,327.11) node    {$\sum $};
\draw (328.75,377.02) node    {$\sum $};
\draw (328.75,270.44) node    {$\sum $};
\draw (215.12,323.56) node  [font=\scriptsize]  {$\ z[ n_{s} -1]$};
\draw (132.44,323.69) node  [font=\scriptsize,rotate=-0.07]  {$\ z[ -n_{s} +1]$};
\draw (96.73,330.98) node  [font=\huge] [align=left] {{\fontfamily{pcr}\selectfont {\large .}}};
\draw (96.73,323.12) node  [font=\huge] [align=left] {{\fontfamily{pcr}\selectfont {\large .}}};
\draw (96.73,315.64) node  [font=\huge] [align=left] {{\fontfamily{pcr}\selectfont {\large .}}};
\draw (95.68,188.42) node  [font=\scriptsize,rotate=-0.07]  {$\ y[ 0]$};
\draw (127.95,187.39) node  [font=\scriptsize,rotate=-358.27]  {$\ y[ 1]$};
\draw (200.43,187.87) node  [font=\scriptsize]  {$\ y[ Nn_{s} -1]$};
\draw (429.82,65.36) node   [align=left] {{\fontfamily{ptm}\selectfont Offline Exponential Generator}};
\draw (293.89,127.78) node  [font=\scriptsize]  {$\overbrace{( y[ 0] ,\ y[ 0] ,\ ...\ ,\ y[ 0]}^{L})$};
\draw (342.99,224.65) node  [font=\scriptsize]  {$( y[ Nn_{s} -1] ,\ y[ Nn_{s} -1] ,\ ...\ ,\ y[ Nn_{s} -1])$};
\draw (479.55,305.01) node  [font=\scriptsize]  {$-1/\sigma ^{2}_{\rm{w}}$};
\draw (479.55,352.79) node  [font=\scriptsize]  {$-1/\sigma ^{2}_{\rm{w}}$};
\draw (479.55,251.92) node  [font=\scriptsize]  {$-1/\sigma ^{2}_{\rm{w}}$};
\draw (70.45,324.55) node  [rotate=-270] [align=left] {{\fontfamily{ptm}\selectfont Argmax ( . )}};
\draw (7.73,317.4) node    {$\hat{d}$};
\draw (440.54,109.69) node  [font=\huge,rotate=-90] [align=left] {{\fontfamily{pcr}\selectfont {\large .}}};
\draw (448.74,109.69) node  [font=\huge,rotate=-90] [align=left] {{\fontfamily{pcr}\selectfont {\large .}}};
\draw (456.55,109.69) node  [font=\huge,rotate=-90] [align=left] {{\fontfamily{pcr}\selectfont {\large .}}};
\draw (149.54,188.69) node  [font=\huge,rotate=-90] [align=left] {{\fontfamily{pcr}\selectfont {\large .}}};
\draw (157.74,188.69) node  [font=\huge,rotate=-90] [align=left] {{\fontfamily{pcr}\selectfont {\large .}}};
\draw (165.55,188.69) node  [font=\huge,rotate=-90] [align=left] {{\fontfamily{pcr}\selectfont {\large .}}};
\draw (410,135) node    {$-$};
\draw (448.67,159.67) node    {$-$};
\draw (485.33,193.33) node    {$-$};
\draw (397.76,271.64) node  [font=\footnotesize]  {${\rm{exp}}( \cdot )$};
\draw (397.76,326.7) node  [font=\footnotesize]  {${\rm{exp}}( \cdot )$};
\draw (397.76,373.2) node  [font=\footnotesize]  {${\rm{exp}}( \cdot )$};
\draw (319.82,100.06) node  [font=\scriptsize]  {$( v_{0}[ 0] ,\ v_{0}[ 1] ,\ ...\ ,\ v_{0}[ L-1])$};
\draw (579.82,100.06) node  [font=\scriptsize]  {$( v_{Nn_{s} -1}[ 0] ,\ v_{Nn_{s} -1}[ 1] ,\ ...\ ,\ v_{Nn_{s} -1}[ L-1])$};
\draw (310.32,184.98) node  [font=\scriptsize]  {$( y[ 1] ,\ y[ 1] ,\ ...\ ,\ y[ 1])$};

\end{tikzpicture}

  \caption{MCS implementation of the proposed theoretical NDA-ML TO estimator in \eqref{8989ioio} using \eqref{eq: monte c}. The solid and dashed lines represent the in-phase and quadrature components of the received samples, respectively. }\label{fig: gamma block}
\end{figure*}

\subsection{Iterative Likelihood Maximization}
Efficient one dimensional
iterative search algorithms can be used to avoid exhaustive search, and thus, reduce the computational complexity of \eqref{8989ioio}. 
In iterative search method, an interval $[d_{\rm{L}}, d_{\rm{U}}]$ containing the true \ac{to} $d^*$ is
established and is then repeatedly reduced on the basis of function evaluations
until a reduced bracket $[d_{\rm{L}}, d_{\rm{U}}]$ is achieved which is sufficiently small. The
minimizer/maximizer can be assumed to be at the center of interval $[d_{\rm{L}}, d_{\rm{U}}]$. These
methods can be applied to any function and differentiability of the function is not
essential.

An iterative search method in which iterations can be performed
until the desired accuracy in either the maximizer or the maximum value of the
objective function is achieved is the golden-section search method \cite{press2007numerical}. 
For a strictly unimodal function with an extremum inside the interval, Golden-section search method finds that extremum, while for an interval containing multiple extrema (possibly including the interval boundaries), it converges to one of them. Implementation of the proposed NDA-ML TO estimation with golden-section search is summarized in Algorithm
\ref{Table1xrrrr}. For Algorithm \ref{Table1xrrrr}, we define
\begin{equation}
\mathcal{L}(d) \triangleq 
{\rm{Prod}}\Big(f^{(d:d+Nn_{\rm{s}}-1)}_{{\bf Y}}({\bf y}; {\rm{H}}_0)\Big),    \end{equation}
where ${\rm{Prod}}\big{(}[z_0,z_1,\dots,z_{u-1}]\big{)} \triangleq \prod_{i=0}^{u-1}z_i$, 
and
\begin{align} \label{6yo09}
 &   {{\bf f}^{(d:q)}_{\bf{Y}}} (\cdot ; {\rm{H}_0} )\triangleq
\\ \nonumber    
&\big{[}f_{Y[d]}(\cdot | {\rm{H}_0} )~f_{Y[d+1]}(\cdot | {\rm{H}_0} )~ \dots~ f_{Y[q]}(\cdot | {\rm{H}_0} )\big{]}^T
\end{align}
with $q \ge d$ and $f_{Y[nn_{\rm{s}}+d]}(\cdot | {\rm{H}_0})$ as the \ac{pdf} of the received sample $y[nn_{\rm{s}}+d]  \triangleq y_{n}[d]$ given ${\rm{H}_0}$.

\begin{algorithm}[t]
    \caption{Golden-section search}\label{Table1xrrrr}
    \begin{algorithmic}[1]
      \Statex \textbf{Initialization:} $\mathcal{D} \gets \{ -n_{\rm{s}},\dots, n_{\rm{s}}-2\}$, $init \gets -n_{\rm{s}}+1$, $last \gets n_{\rm{s}}-1$, $ratio \gets 0.381966$ 
      \State $c  \gets init + +\lfloor ratio * (last-init)\rfloor$
      \While{$\mathcal{L}(c)<\mathcal{L}(init)$ or $\mathcal{L}(c)<\mathcal{L}(last)$}
      \If{$|\mathcal{D}|$=2}
      \If{$\mathcal{L}(init)>\mathcal{L}((last)$}
      \State \Return $\hat{d}^{\textrm{opt}} \gets init$
      \Else 
      \State \Return $\hat{d}^{\textrm{opt}} \gets last$
      \EndIf
      \EndIf
      \State $\mathcal{D} \gets \mathcal{D}\setminus \{c\}$ 
      \State $c \gets$ Choose a random index from the set 
       $\mathcal{D}$ 
      \EndWhile
      
      \While{$last-init \ge 4$}
      \If{$last-c \ge c-init$}
          \State $d \gets c+\lfloor ratio * (last-init)\rfloor$
          \If{$\mathcal{L}(d)<\mathcal{L}(c)$} 
            \State $last \gets d$
          \Else
            \State $init \gets c$
            \State $c \gets d$
          \EndIf
      \Else 
          \State $d \gets c-\lfloor ratio*(last-init) \rfloor$
          \If{$\mathcal{L}(d)<\mathcal{L}(c)$} 
            \State $init \gets d$
          \Else
            \State $last \gets c$
            \State $c \gets d$
          \EndIf
      
      \EndIf
      \EndWhile
      \State $\hat{d}^{\textrm{opt}} \gets \operatorname*{argmax}_{i \in \{init,\dots,last\}}  \mathcal{L}(i)$ 
      \State \Return $\hat{d}^{\textrm{opt}}$
    \end{algorithmic}
  \end{algorithm}

\subsection{Complexity Analysis}

Here, we compare the complexity of the proposed theoretical and the MCS time synchronization algorithms. In our complexity analysis, a real addition, multiplication, or division is counted as one floating point operation (FLOP). Considering the fact that the number of FLOPs for $H_{\rm{d}}$, $d \ge 0$, is higher than that for $d <0$, the number of FLOPs per TO hypothesis for the theoretical and the MCS time synchronization algorithms are upper bounded by  $2N(n_{\rm{x}}+n_{\rm{z}})+(2N-1)n_{\rm{x}}n_{\rm{h}}(8+2u_1+2u_2)$ and $((6+2u_1)(L-1)+1)(n_{\rm{x}}+n_{\rm{z}}-1)(N-1)$, respectively, where $u_1$ and $u_2$ denote the number of FLOPs for the computation of $\exp(\cdot)$ and error function $\Phi(\cdot)$, respectively. Here, $L$ is a trade-off parameter between the accuracy and complexity for the MCS algorithm. That is, increasing $L$ increases both accuracy and complexity of the MCS algorithm.




The ratio of the average estimation time (RAET) versus $n_{\rm{x}}$ for the proposed theoretical NDA-ML and the MCS algorithms are shown in Table \ref{table: time analysis}.  
Here, RAET is defined as:
\begin{equation}
 \text{RAET}= \frac{\text{Average estimation time for MCS}}{\text{Average estimation time for the theoretical}} .  
\end{equation}

The number of  Monte Carlo samples is set to $L=10^4$ for the MCS implementation, and exhaustive search method is used for both algorithms.
As seen, MCS implementation offers a lower computational complexity compared to the theoretical \ac{ml} estimator using  Theorem \ref{theo: pdf y}. This complexity reduction is obtained at the expense of an insignificant performance degradation in terms of lock-in probability as it will be shown in the  next section.


\begin{table}[t!]
\vspace{-1em}
\centering 
 \caption{Complexity Analysis}
\label{table: time analysis}
\resizebox{0.4\textwidth}{!}{
\begin{tabularx}{0.3\textwidth}{@{  }l*{4}{C}c@{}}
\toprule
$n_{\rm{x}}$ & 64    & 128 & 256 & 512 & 1024 \\ 
\midrule
RAET  &  0.841   & 0.779 & 0.704 & 0.668 & 0.629   \\ 
\bottomrule
\end{tabularx} }
\end{table}

\section{Simulations}\label{simmp}
In this section, we evaluate the performance of the proposed \ac{nda}-\ac{ml} time synchronization algorithm through several simulation experiments. 

\subsection{Simulation Setup}
We consider a \ac{zp}-OFDM system with 128-QAM modulation in a frequency-selective Rayleigh fading channel. 
Unless otherwise mentioned, the number of sub-carriers is $n_{\rm{x}}=128$, the number of zero-padded samples is $n_{\rm{z}}=15$, and
 the number of observed OFDM symbols at the receiver is $N=10$.
The sampling time of the ZP-OFDM  system at the receiver is $T_{\rm sa}=10^{-6}$s.
An uncorrelated multipath fading channel with $n_{\rm{h}}=10$ taps and maximum delay spread of {$\tau_{\rm{max}}=10 \mu$s} is considered. 
The  delay  profile  of the  Rayleigh
fading  channel in \eqref{7u8i0000} is modeled as an exponential-decay function, i.e., $\sigma^2_{h_l}=\alpha\exp(-\beta l)$, $l=0,1,\dots,n_{\rm{h}}-1$, where  $p_{\rm{h}}=\sum_{l=0}^{n_{\rm{h}}-1}\sigma^2_{h_l}=1$,  $\alpha=1/2.5244$, and $\beta=0.5$. The maximum Doppler spread of the fading channel  is set to $f_{\rm{D}}=5$ Hz.
Without loss of generality, the transmit power is assumed to be {$\sigma_{\rm{x}}^2=1$}, and  
the \ac{awgn} is modeled as  a zero-mean complex Gaussian random variable with variance $\sigma^2_{\rm{w}}$, which varies according to the value of \ac{snr} $\gamma \triangleq \sigma_{\rm x}^2 p_{\rm h} / \sigma_{\rm w}^2$. The \ac{to} introduced to the system is modeled as a uniformly distributed integer random variable in the range of $d \in [-30 ,  \ 30]$.  Simulations are evaluated under $10^4$ Monte Carlo realizations, and the number of samples for MCS implementation of the proposed theoretical NDA-ML algorithm is $L=10^4$. The performance of the proposed algorithms are evaluated in terms of mean squared error (MSE) and lock-in probability. 
Here, the lock-in probability is defined as the probability that the estimated TO (given in sampling time) equals to the actual TO. That is, any non-zero error is counted as a missed estimation.

\subsection{Simulation Results}
The performance of the proposed theoretical NDA-ML algorithm, its MCS implementation, and the current state-of-the-art NDA \ac{to} estimator for \ac{zp}-OFDM, i.e. transition metric  (TM) \cite{LeNir2010}, for different values of $E_{\rm{b}}/N_0$ are shown in Fig.~\ref{fig: snr}.  As can be seen, the proposed theoretical algorithm and its  MCS implementation outperform  the TM  algorithm since they maximize the likelihood function while TM is a heuristic algorithm.  Moreover, as seen, there is a negligible performance gap between the proposed theoretical NDA-ML algorithm and its MCS implementation. 
This performance gap can be further reduced by increasing the number of Monte  Carlo samples $L$ used for averaging in \eqref{eq: monte c} at the expense of higher complexity. In Fig.~\ref{fig: snr}. we also illustrate the performance of the  the sub-optimal time synchronization  algorithm  in  \cite{K2020},  which relies  on  Gaussian  PDF  approximation  of  the  received samples.  As can be seen, there is a large gap between the proposed algorithms and the sub-optimal algorithm in \cite{K2020} at low SNR values.

Fig. \ref{fig: dop}  illustrates the effect of the maximum Doppler spread (mobility) on the performance of the proposed theoretical NDA-ML algorithm and its MCS implementation. As seen, the lock-in probability increases as the maximum Doppler spread increases. The reason is that the time dynamics of 
the channel taps contributing (through convolution) to the received samples become less correlated as the maximum Doppler spread increases. 
However, for zero maximum Doppler spread, identical channel taps contribute to the received samples.
Thus, our independency assumption on the received samples becomes more valid for higher values of maximum Doppler spread. These results reveal that the proposed NDA-ML algorithm can be considered as a promising candidate for vehicle-to-vehicle (V2V) communications.

The effect of the number of OFDM symbols $N$, used for time synchronization, on the performance of the 
proposed theoretical NDA-ML algorithm, its MCS implementation, and the TM algorithm \cite{LeNir2010} are represented in Fig. \ref{fig: obs}. As expected, the higher $N$, the higher the lock-in probability. 
Major improvements in performance occurs when the number of OFDM symbols increases from 1 to 10, and then the rate of performance improvement decreases. This is due to the fact that innovation introduced by each new sample to an ML estimator deacreases as the total number of samples (used for estimation) increases.

In Fig. \ref{fig: tap}, the performance of the proposed theoretical NDA-ML TO estimator, its MCS implementation, and the TM estimator \cite{LeNir2010} versus the number of channel taps $n_{\rm{h}}$ for $n_{\rm{z}}=20$ at $15$ dB $E_{\rm{b}}/N_0$ are shown. 
As seen, the lock-in  probability of the theoretical NDA-ML TO estimator and its MCS implementation degrades as $n_{\rm{h}}$ increases. This is because the sharpness of the likelihood function decreases; the sharpness of likelihood function determines how accurately we can estimate an unknown parameter.

\begin{figure}
\vspace{-1em}
\centering
\includegraphics[height=2.835in]{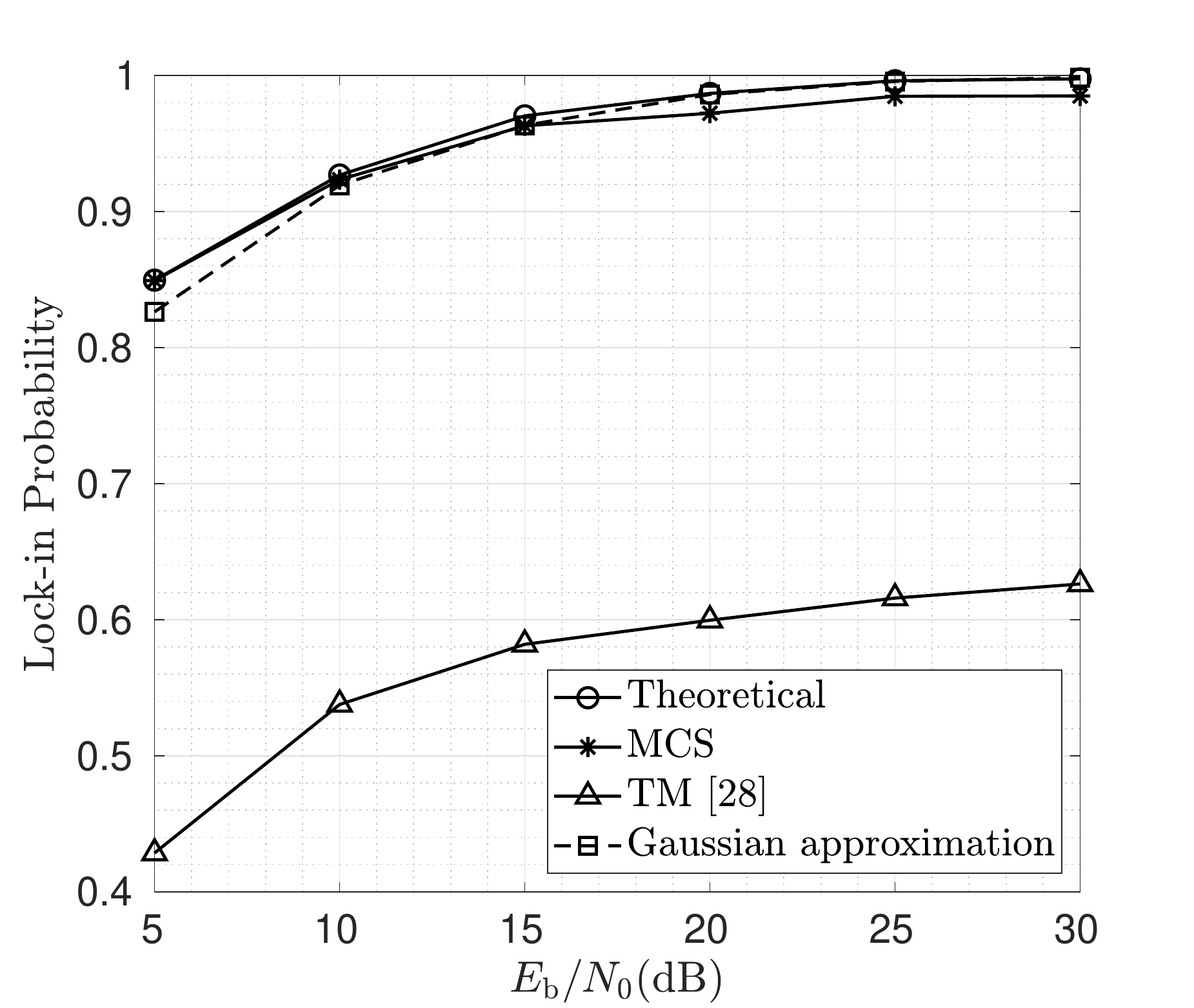}
  \caption{Lock-in probability versus $E_{\rm{b}}/N_0$.}\label{fig: snr}
\end{figure}

\begin{figure}
\vspace{-2.1em}
\centering
\includegraphics[height=2.835in]{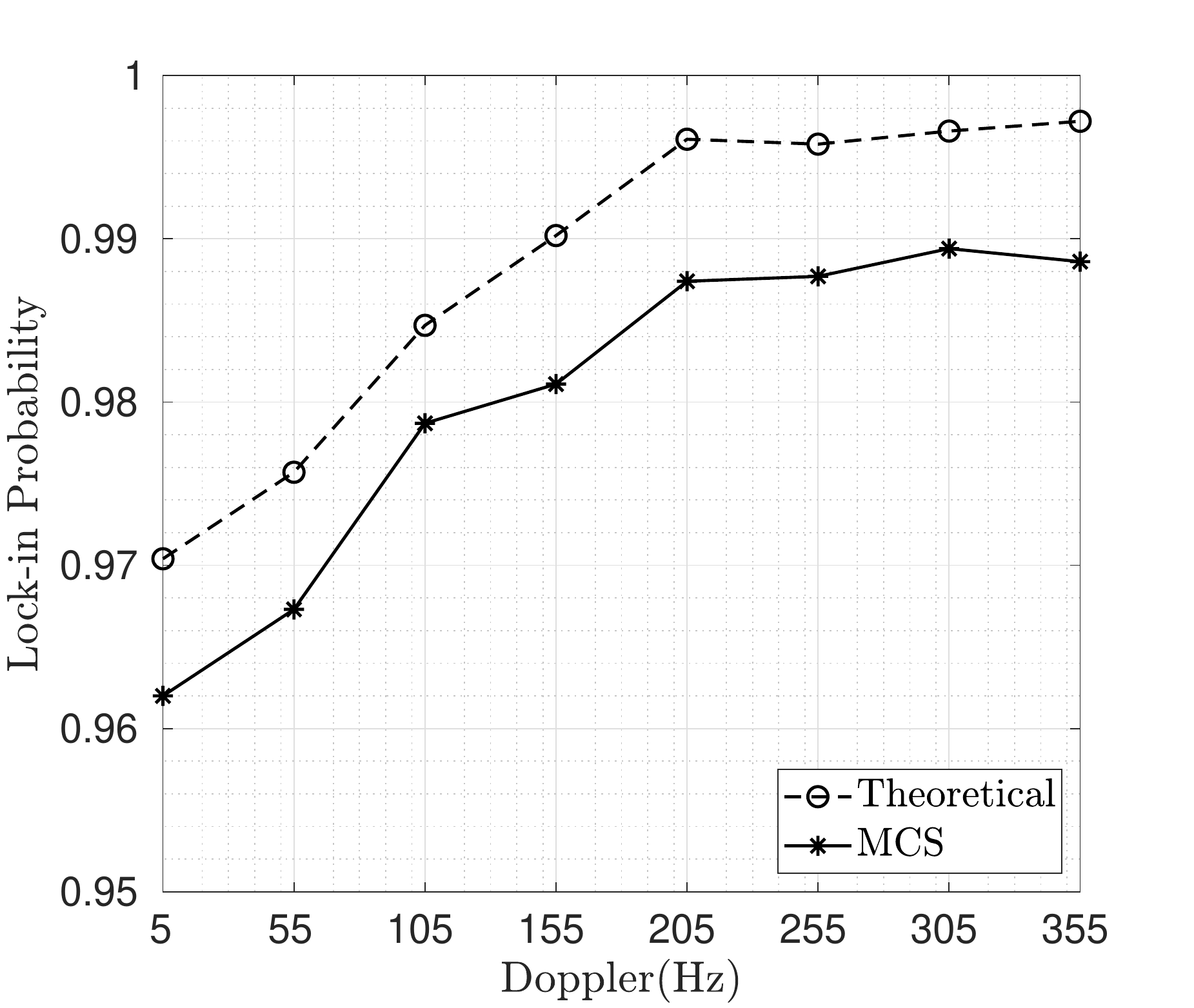}
  \caption{Lock-in probability versus maximum Doppler spread of the fading channel at $15$ dB $E_{\rm{b}}/N_0$.} \label{fig: dop}
  \vspace{-1em}
\end{figure}

\begin{figure}
\centering
\includegraphics[height=2.835in]{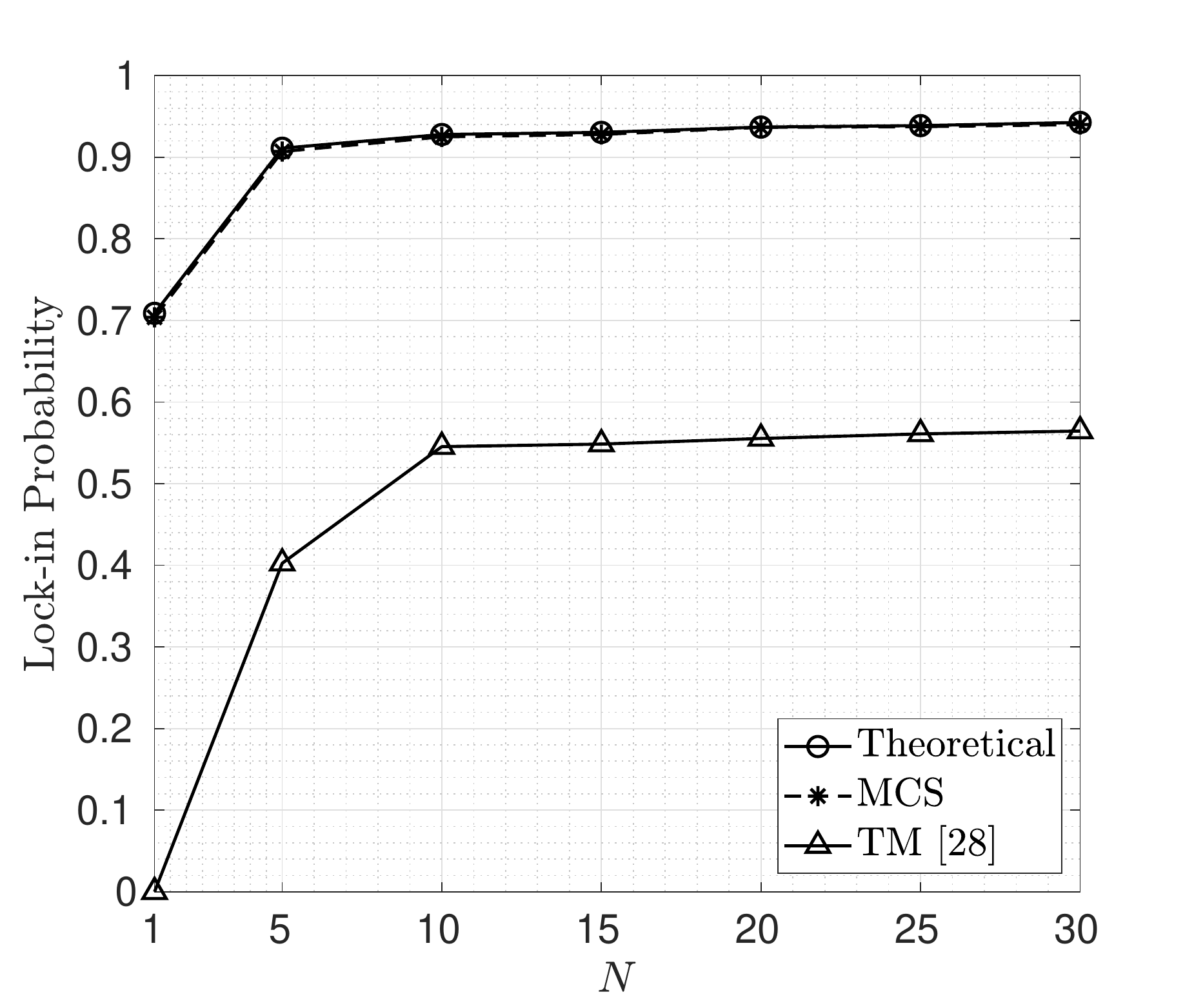}
  \caption{Lock-in probability versus the number of observation vectors $N$ at $10$ dB $E_{\rm{b}}/N_0$.}\label{fig: obs}
\end{figure}
\begin{figure}
\vspace{-2em}
\centering
\includegraphics[height=2.835in]{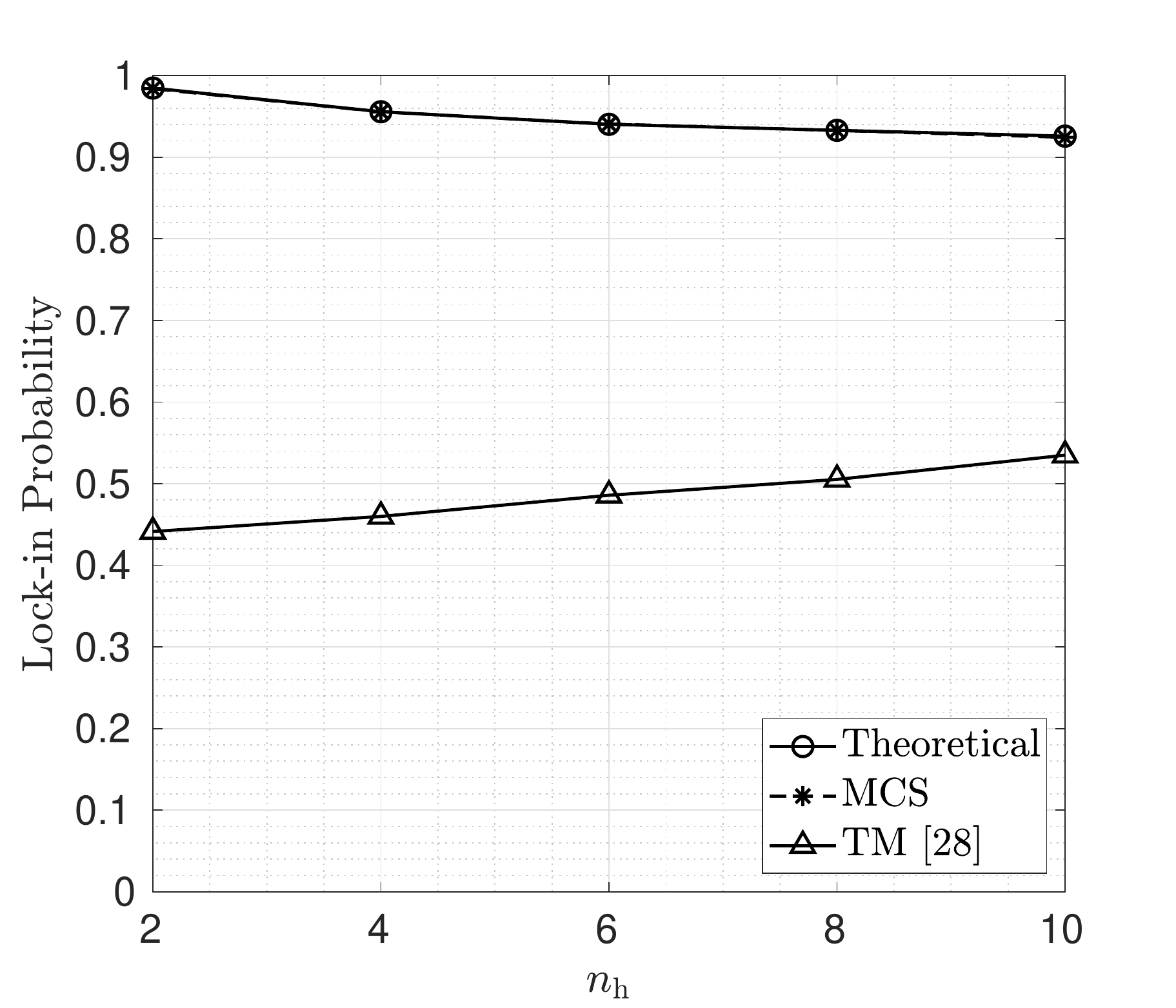}
  \caption{Lock-in probability versus the number of channel taps $n_{\rm{h}}$ at $10$ dB $E_{\rm{b}}/N_0$.}\label{fig: tap}
\end{figure}

In Fig.~\ref{fig: hist}, we  illustrate the empirical probability mass function (PMF) of the synchronization error for the proposed theoretical and the MCS algorithms at 10 dB $E_{\rm{b}}/N_0$. As can be seen, the empirical PMF of the error is not symmetric around zero and is slightly biased towards positive \ac{to}s. Based on asymptotic properties of the MLEs, the biased term approaches zero as $N\rightarrow \infty$. Moreover, we observe that the synchronization error falls in small interval, i.e., $\{-2,-1,1,2\}$. This means that the proposed algorithms offer low 
MSE as shown in Fig. \ref{fig: mse}. 
Because of low MSE, the proposed  time-synchronization algorithms can take advantage of low complexity channel coding to further improve synchronization performance.

The effect of PDP estimation error on the performance of the proposed theoretical and the MCS time synchronization algorithms is shown in Fig.~\ref{fig: sens}. We model the estimated PDP as 
\begin{align}
\hat{\sigma}^2_{{\rm{h}}_k} \in {\cal{U}}\Big{[}\sigma^2_{{\rm{h}}_k}-\alpha\sigma^2_{{\rm{h}}_k},\sigma^2_{{\rm{h}}_k}+\alpha\sigma^2_{{\rm{h}}_k}\Big{]},
\end{align}
where $\sigma^2_{{\rm{h}}_k}$, $k=0,1,\dots, n_{\rm{h}}-1$, is the true PDP, and  
${\cal{U}}[a,b]$ denotes the uniform distribution in the interval $[a,b]$. 
In Fig.~\ref{fig: sens}, we show the lock-in probability versus $\alpha \in [0,1]$ at 10 dB  $E_{\rm{b}}/N_0$. As can be seen, the theoretical and the MCS algorithms are robust to the 
the delay profile estimation error for $\alpha \in [0,1]$ and $\alpha \in [0,0.5]$, respectively. 
While the performance of the theoretical algorithm slightly degrades for
$\alpha \in [0.5,1]$, the lock-in probability is still larger than $0.75$.

\begin{figure}
\centering
\includegraphics[height=2.835in]{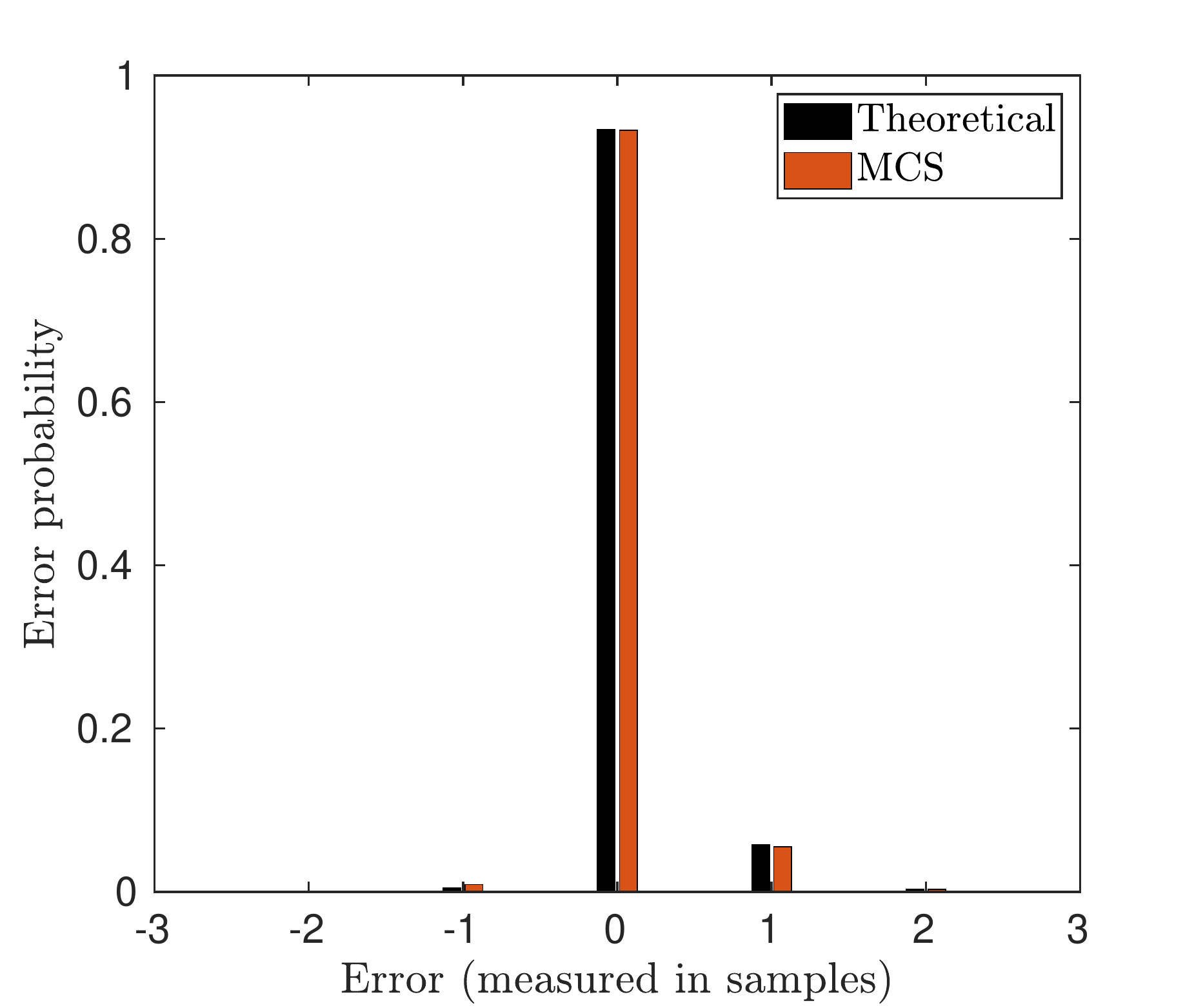}
  \caption{PMF of the synchronization error for the theoretical and the MCS algorithms at 10 dB  $E_{\rm{b}}/N_0$.}\label{fig: hist}
\end{figure}

\begin{figure}
\centering
\vspace{-2em}
\includegraphics[height=2.835in]{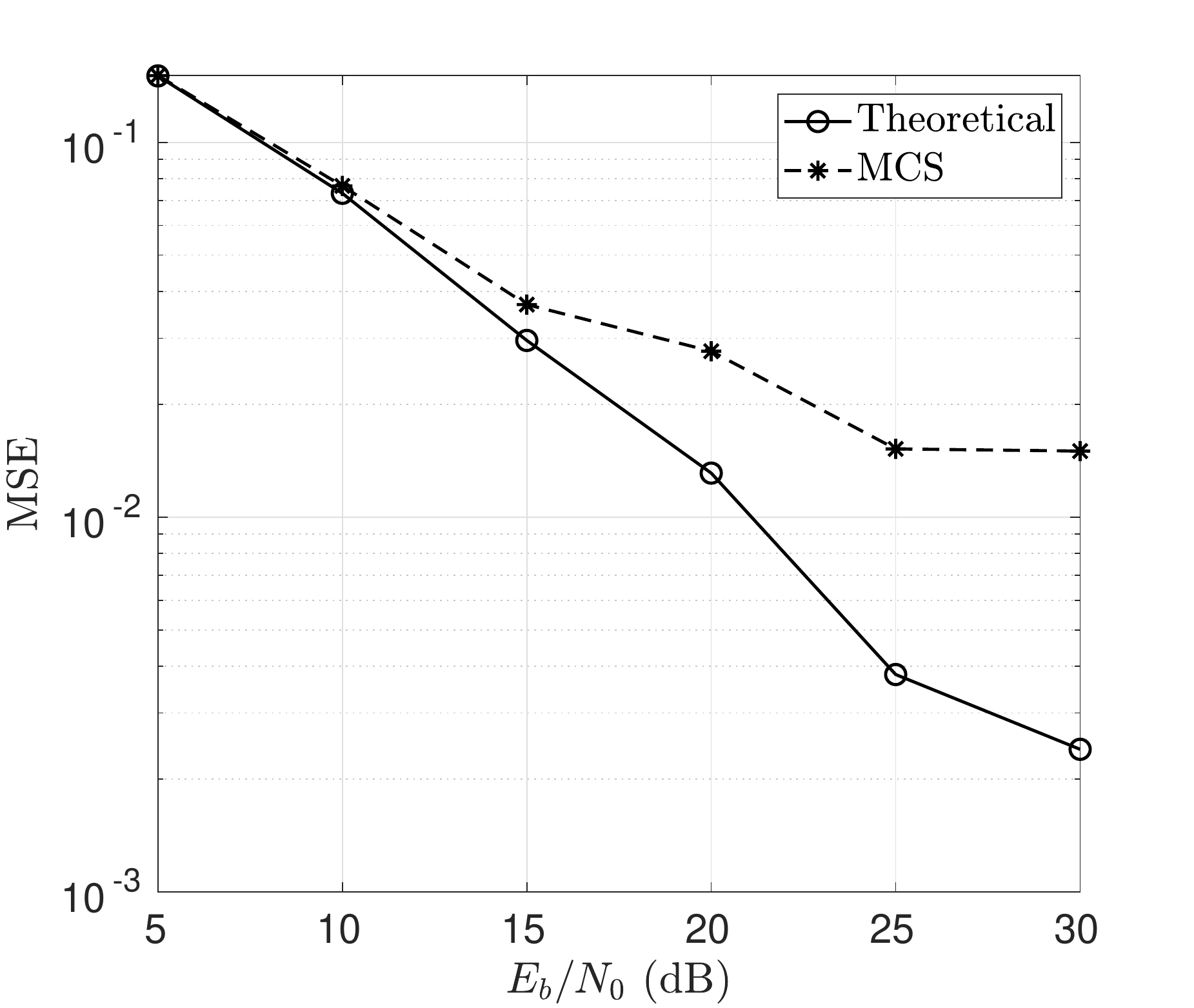}
  \caption{Mean Squared Error (MSE) versus SNR for Theoretical and MCS method.}\label{fig: mse}
\end{figure}

\begin{figure}
\centering
\includegraphics[height=2.835in]{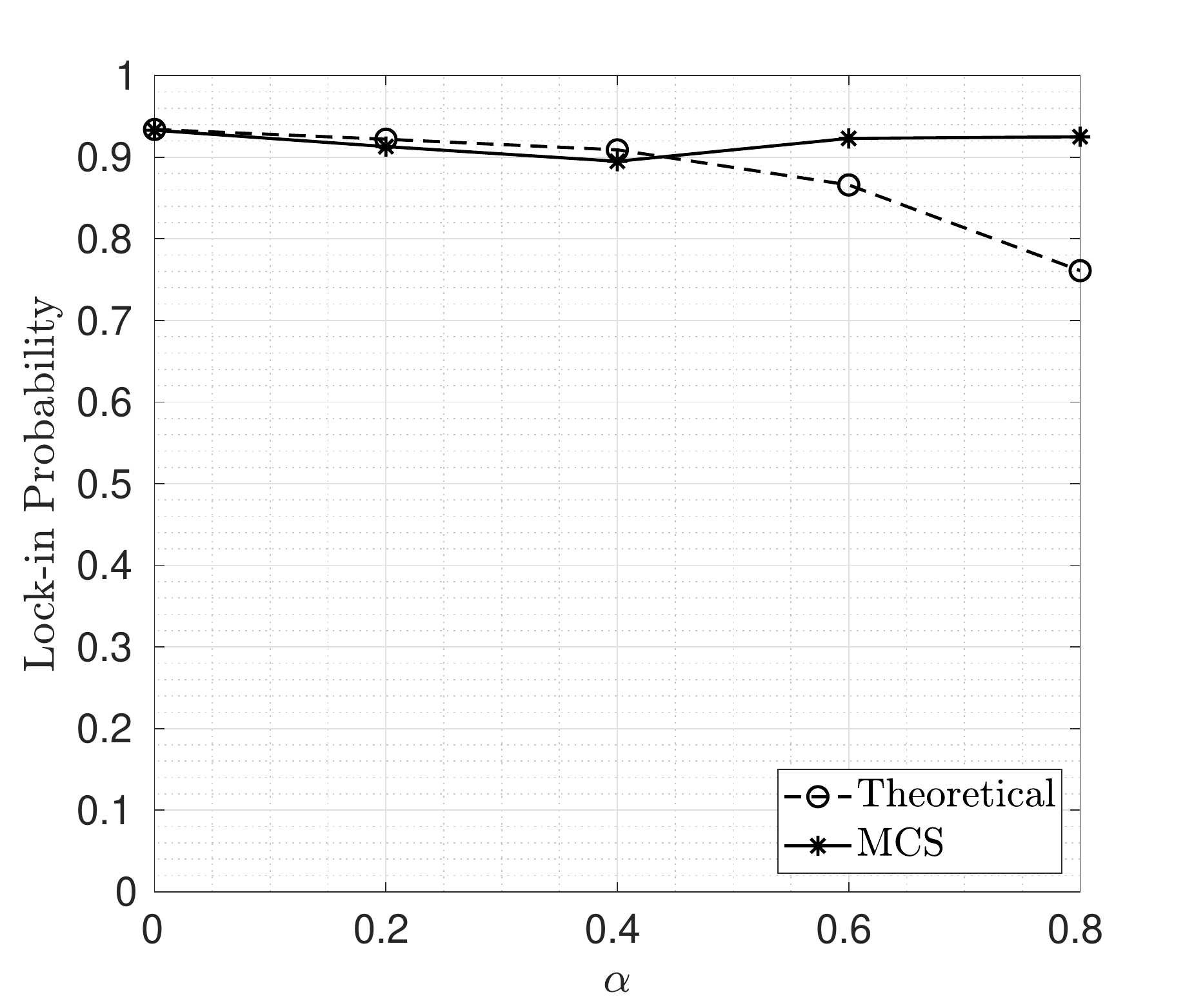}
  \caption{Sensitivity of the proposed time synchronization algorithms to PDP estimation error for $N=10$ at 10 dB  $E_{\rm{b}}/N_0$.}\label{fig: sens}
\end{figure}

\section{Conclusions}\label{sec: conclu}
In this paper, for the first time in the literature, the \ac{nda}-\ac{ml} time synchronization for \ac{zp}-OFDM was analytically derived and a feasible solution for its implementation based on MCS technique was proposed.
The obtained time synchronization method can be employed for both frame and symbol synchronization. Moreover,  
to achieve the optimal time synchronization method, we obtained a closed-form approximate expression for the distribution of convolution, i.e. received convolved signal.
Simulation results verify that the proposed theoretical \ac{nda}-\ac{ml} time synchronization and its MCS implementation can offer high lock-in probabilities even at low SNR values. Also, they are effective in highly time-selective channels with large maximum Doppler spread. 
These properties make \ac{zp}-OFDM a promising candidate for low-power IoT networks and V2V communications.

\IEEEpeerreviewmaketitle

\bibliographystyle{IEEEtran} 
\bibliography{IEEEabrv, references.bib}

\appendices

\section{Proof of Theorem \ref{uuonkml}}
\label{proof: theo uncorrelated}

Let us define sequences $\Set{S}$ and $\Set{Y}$ as follows
\begin{align}\label{khkkjh}
\Set{S}&\triangleq\big{\{}\dots,\tilde{s}[-1],\tilde{s}[0],\tilde{s}[1],\dots\big{\}} \\  \nonumber
&=\big{\{}\dots,0,0,{\bf{s}}_0^T,{\bf{s}}_1^T,\dots\big{\}},
\end{align}
and 
\begin{align}\label{khkkjh}
\Set{Y}&\triangleq 
\big{\{}\dots, \tilde{y}[-1], \tilde{y}[0],\tilde{y}[1],\dots\big{\}},  
\end{align}
where ${\bf{s}}_n$ is defined in \eqref{eq:11},  
$\tilde{s}[nn_{\rm{s}}+m]  \triangleq s_{n}[m]$,
and 
\begin{align}\label{yuiiir}
\tilde{y}[u]=\sum_{l=0}^{n_{\rm{h}}-1} h[l] \tilde{s}[u-l]+{w}[u].
\end{align}
with  ${w}[u]\sim \mathcal{CN}(0,\sigma^2_{\rm{w}})$ as \ac{awgn}

Since $\Set{S}$ is composed of OFDM samples  $x_n(mT_{\rm{sa}})$, which are modeled as zero-mean \ac{iid} complex Gaussian random variables, and zero-padded elements, 
we have
\begin{align}
\mathbb{E}\{\tilde{y}[u]\}=0, 
\end{align}
and
\begin{align}
 \mathbb{E}\{\tilde{s}[v]\tilde{s}^*[u]\}={0},\,\,\,\,\,\  v \neq u. 
 \end{align}
 For $u=v$, we either have  $\mathbb{E}\{\tilde{s}[v]\tilde{s}^*[u]\}={0}$ or $\mathbb{E}\{\tilde{s}[v]\tilde{s}^*[u]\}= \sigma_{\rm{x}}^2$, where the former is valid for the zero-padded samples, and the latter is obtained from \eqref{909m0p} for OFDM samples.
Since the noise component $w[n]$ in \eqref{yuiiir} is zero-mean \ac{awgn}, we can write\footnote{It is obvious that for the observation samples containing noise only samples, $\mathbb{E}\big{\{}{\tilde{y}[u]}\tilde{y}^*[v]\big{\}}=0$, $v \neq u$.}
\begin{align}\label{0123df}
\mathbb{E}\big{\{}{\tilde{y}[u]}\tilde{y}^*[v]\big{\}} =\sigma_{\rm{x}}^2\sum\limits_{l = 0}^{{n_{\rm{h}}} - 1}  \mathbb{E}\big{\{}{h[l]{h^*[v - u + l]}}\big{\}},  
\end{align}
where $u \neq v$.
Because the channel taps in \eqref{7u8i0000} are uncorrelated random variables, we can write  
\begin{align}\label{uiomoer}
\mathbb{E}\big{\{}{\tilde{y}[u]}\tilde{y}^*[v]\big{\}}=0,\,\,\,\,\ u \neq v,  \end{align}
which implies that $\tilde{y}[u]$ and  $\tilde{y}[v]$ are also uncorrelated random variables.

Finally,  we can easily 
conclude that the elements of $\bf{y}$ given hypothesis ${\rm{H}}_d$ 
share the same property with $\Set{Y}$ in the context of correlation 
since $\bf{y}$ is obtained by windowing $\Set{Y}$.

\section{ Proof of Theorem \ref{theo: pdf y}}
\label{proof: theo}

Let us write the in-phase component of the received sample in \eqref{uiomoer21} as follows 
\begin{align}\label{89989090}
y_{n_{\rm{I}}}[m]=v_{n_{\rm{I}}}[m]+w_{n_{\rm{I}}}[m],
\end{align}
where $y_{n_{\rm{I}}}[m] \triangleq\Re\{y_n{[m]}\}$, 
$w_{n_{\rm{I}}}[m] \triangleq \Re\{w_n{[m]}\}$, and \begin{align}\nonumber
v_{n_{\rm{I}}}[m] \triangleq\Re\{v_n{[m]}\}=\Re\Bigg{\{}\sum_{l=0}^{n_{\rm{h}}-1} h[l] s_{n}[m-l]\Bigg{\}}. 
\end{align}
We consider that 
$v_{n_{\rm{I}}}[m]$ and $w_{n_{\rm{I}}}[m]$ are the realizations of the random variables $V_{n_{\rm{I}}}[m]$ and $W_{n_{\rm{I}}}[m]$, respectively. 
We also denote the \ac{pdf} of $V_{n_{\rm{I}}}[m]$ and $W_{n_{\rm{I}}}[m]$ with $f_{V_{n_{\rm{I}}}[m]}(v | {\rm{H}}_0)$ and $f_{W_{n_{\rm{I}}}[m]}(w | {\rm{H}}_0)$.

{\it Case 1}: $m \in  \{ m ~|~ 0 \le m \le n_{\rm{s}}-1  ~{\rm{when}}~  n<0 \} \cup \{ m ~|~ n_{\rm{x}}+n_{\rm{h}}-1 \le m \le n_{\rm{s}}-1 ~{\rm{when}}~  n \ge 0\}$

\begin{figure*}[t] 
\label{eq:6767}

\setcounter{equation}{43}
 \begin{equation} \label{eq: expa conv 2}
v_{n_{\rm{I}}}[m] \hspace{-0.2em}= \hspace{-0.2em}
     \begin{cases}
      \sum^{m}_{l=0} h_{\rm{I}}[l] x_{n_{\rm{I}}}[m-l] - h_{\rm{Q}}[l] x_{n_{\rm{Q}}}[m-l]  ~~~~~~~~~~~~~~~~~~~~~~~~ 0 \le m \le n_{\rm{h}}-2,\\
       \sum^{n_{\rm{h}}-1}_{l=0} h_{\rm{I}}[l] x_{n_{\rm{I}}}[m-l] - h_{\rm{Q}}[l] x_{n_{\rm{Q}}}[m-l]  ~~~~~~~~~~~~~~~~~~~~~~ n_{\rm{h}}-1 \le m \le n_{\rm{x}}-1,\\
       \sum^{n_{\rm{h}}-1}_{l=m-n_{\rm{x}}+1} h_{\rm{I}}[l] x_{n_{\rm{I}}}[m-l] - h_{\rm{Q}}[l] x_{n_{\rm{Q}}}[m-l]  ~~~~~~~~~~~~~~~~ n_{\rm{x}} \le m \le n_{\rm{x}}+n_{\rm{h}}-2,\\
       0 ~~~~~~~~~~~~~~~~~~~~~~~~~~~~~~~~~~~~~~~~~~~~~~~~~~~~~~~~~~~~~~~~~~~~~~ n_{\rm{x}}+n_{\rm{h}}-1 \le m \le n_{\rm{s}}-1
     \end{cases}
\end{equation}
\vspace{-1.2pt}
\end{figure*}


Let us write  $v_{n_{\rm{I}}}[m]$ as equation \eqref{eq: expa conv 2}, at the top of the next page, where $h_{\rm{I}}[l]\triangleq \Re\{h[l]\}$, $h_{\rm{Q}}[l]\triangleq \Im\{h[l]\}$, 
 $x_{n_{\rm{I}}}[m]\triangleq \Re\{x_n[m]\}$, and  $x_{n_{\rm{Q}}}[m]\triangleq \Im\{x_n[m]\}$. 
By replacing \eqref{eq: expa conv 2} into \eqref{89989090}, we can write (see Fig. \ref{fig: conv})
\begin{align}\label{eq:47}
y_{n_{\rm{I}}}[m]=w_{n_{\rm{I}}}[m],\,\,\,\,\,\,
n_{\rm{x}}+n_{\rm{h}}-1 \le m \le n_{\rm{s}}-1. 
\end{align}
Also, given ${\rm{H}}_0$, we have 
\begin{align}\label{eq:48}
y_{n_{\rm{I}}}[m]=w_{n_{\rm{I}}}[m],\,\,\,\,\,\, n<0. 
\end{align}
By using $W_{n_{\rm{I}}}[m] \sim f_{W_{n_{\rm{I}}}[m]}(w | {\rm{H}}_0)=
\mathcal{CN}(0,\sigma^2_{\rm{w}}/2)$, \eqref{eq:47}, and \eqref{eq:48}, we obtain \eqref{uioooppioio}. 

 {\it Case 2}: $m \in  \{ m ~|~ 0 \le m \le n_{{\rm{x}}}+n_{{\rm{h}}}-2~{\rm{when}}~  n \ge 0\}$

Since $W_{n_{\rm{I}}}[m]$ and $V_{n_{\rm{I}}}[m]$ are independent random variables, we can write 
the \ac{pdf} of $y_{n_{\rm{I}}}[m]$ in \eqref{89989090} given hypothesis ${\rm{H}}_0$ as the convolution of their \ac{pdf}s. Prior to convolution derivation, we first need to derive the \ac{pdf} of $V_{n_{\rm{I}}}[m]$ for $ 0 \le m\le n_{\rm{x}}+n_{\rm{h}}-2$. To obtain the \ac{pdf} of  $V_{n_{\rm{I}}}[m]$, we can employ the \ac{chf} method.  


By using (6.1) in \cite{simon2007probability} and  \eqref{eq: expa conv 2}, we can write the CHF of $V_{n_{\rm{I}}}[m]$ given ${\rm H}_0$ as follows 
\begin{align} \label{eq: mgf gen}
\phi_{V_{n_{\rm{I}}}[m]|{\rm{H}}_0} (t)  & \triangleq \frac{1}{\prod^{b}_{k=a}} \Big{(} 1+\frac{\sigma_{h_k}^2 \sigma_{{\rm{x}}}^2 t^2}{4} \Big{)}\\ \nonumber  
& =\prod^{b}_{k=a} \frac{1}{\Big( 1+ j \frac{\sigma_{h_k} \sigma_{{\rm{x}}} }{2}t \Big) } \frac{1}{ \Big(1- j \frac{\sigma_{h_k} \sigma_{{\rm{x}}} }{2}t\Big)}
\end{align}
 for $ 0 \le m\le n_{{\rm{x}}}+n_{{\rm{h}}}-2$, where 
$(a,b)$ is given in \eqref{eq: a b no isi}.

The \ac{chf} of the random variable $X \triangleq X_1+X_2+ \dots +X_L$, where $X_i$ and $X_j$ are independent random variables, is given as follows
\begin{align}\label{8989090}
\phi_X(t)=\phi_{X_1}(t)\phi_{X_2}(t)\dots\phi_{X_L}(t). 
\end{align}
By employing \eqref{8989090}, we can write the random variable $V_{n_{\rm{I}}}[m]$ with the \ac{chf} in \eqref{eq: mgf gen} as the summation of independent random variables as follows
\begin{align}\label{eq: decomposee}
    V_{n_{\rm{I}}}[m] =  \sum^{b}_{k=a} (E_{k}-E'_{k})
    =
     V_1 - V_2,
\end{align}
\noindent where  $V_1 \triangleq \sum^{b}_{k=a} E_{k}$,  $V_2 \triangleq \sum^{b}_{k=a} E'_{k}$, and $(a,b)$ is given in \eqref{eq: a b no isi}. In \eqref{eq: decomposee}, $E_k$ and $E'_k$ are independent and exponentially distributed random variables with rate parameter $\lambda_k=(\sigma_{{\rm{h}}_k} \sigma_{\rm{x}}/2)^{-1}$. Using equation (7) in \cite{bibinger2013notes}, we can write the \ac{pdf} of $V_1$  as follows
\begin{equation}
f_{V_1}(v_1) =  \prod_{i=a}^{b} \lambda_i \sum_{j=a}^{b} \frac{ e^{-\lambda_j v_1} }{ \prod_{k=a, k \neq j}^{b} (\lambda_k - \lambda_j) }.
\end{equation}
Similar expression holds for the \ac{pdf} of $V_2$.
Since $V_{n_{\rm{I}}}[m]=V_1-V_2$, and $V_1, V_2 \in [0,\infty)$, then, $V_{n_{\rm{I}}}[m] \in (-\infty, \infty)$. 

The PDF of the sum of two independent random variables is the convolution of their PDFs. Since $V_1$ and $V_2$ are independent random variables,  
for $v \ge 0$, we can write 
\vspace{-0.1 cm}
\begin{align}\label{0909opop}
\hspace{-1em}f_{V_{n_{\rm{I}}}[m]}(v | \rm{H}_0)& =  \int_{0}^{\infty} f_{V_1}(v+v_2) f_{V_2}(v_2) dv_2 \\ \nonumber 
&= \int_{0}^{\infty}
\prod_{i=a}^{b} \lambda_i \sum_{j=a}^{b} \frac{ e^{-\lambda_j (v+v_2)} }{ \prod_{k=a, k \neq j}^{b} (\lambda_k - \lambda_j)} \\ \nonumber
&~~~~~\times \prod_{r=a}^{b} \lambda_r \sum_{n=a}^{b} \frac{ e^{-\lambda_n v_2} }{ \prod_{p=a, p \neq j}^{b} (\lambda_p - \lambda_n)} dv_2
 \\ \nonumber
&=  \Bigg(\prod_{i=a}^{b} \lambda_i \Bigg)^2 \sum_{j=a}^{b}  \sum_{n=a}^{b} \frac{ 1}{ \prod_{k=a, k \neq j}^{b} (\lambda_k - \lambda_j)} \\ \nonumber
&~~~~~ \times\frac{ 1 }{ \prod_{p=a, p \neq j}^{b} (\lambda_p - \lambda_n)} \frac{e^{-\lambda_j v}}{\lambda_j + \lambda_n}.
\end{align}
\vspace{-0.1cm}
Similarly, for $v \le 0$, we obtain
\begin{align} \nonumber
\hspace{-0.2em}f_{V_{n_{\rm{I}}}[m]}(v | \rm{H}_0)& =    \Bigg( \prod_{i=a}^{b} \lambda_i \Bigg)^2 \sum_{j=a}^{b}  \sum_{n=a}^{b} \frac{ 1}{ \prod_{k=a, k \neq j}^{b} (\lambda_k - \lambda_j)} \\ \label{pdf_v}
&~ \times\frac{ 1 }{ \prod_{p=a, p \neq j}^{b} (\lambda_p - \lambda_n)} \frac{e^{\lambda_j v}}{\lambda_j + \lambda_n}.
\end{align}

Now, we can write the \ac{pdf} of the received sample $y_{n_{\rm{I}}}[m]$ as the convolution of $f_{V_{n_{\rm{I}}}[m]}(v | {\rm{H}}_0)$ and $f_{W_{n_{\rm{I}}}[m]}(w | {\rm{H}}_0)$ as follows 
\begin{equation} \label{eq: C1+C2}
\begin{split}
f_{Y_{n_{\rm{I}}}[m]}(y & | {\rm{H}}_0) = \int_{-\infty}^{\infty} f_{W_{n_{\rm{I}}}[m]}(y-v) f_{V_{n_{\rm{I}}}[m]}(v | {\rm{H}}_0) dv \\
&= \underbrace{ \int_{0}^{\infty} f_{W_{n_{\rm{I}}}[m]}(y-v) f_{V_{n_{\rm{I}}}[m]}(v | {\rm{H}}_0) dv}_{C1} \\ &~~~+ \underbrace{ \int_{-\infty}^{0} f_{W_{n_{\rm{I}}}[m]}(y-v) f_{V_{n_{\rm{I}}}[m]}(v | {\rm{H}}_0) dv}_{C2}.
\end{split}
\end{equation}
By using \eqref{0909opop} and  $f_{W_{n_{\rm{I}}}[m]}(w | {\rm{H}}_0)=
\mathcal{CN}(0,\sigma^2_{\rm{w}}/2)$, the first integral in \eqref{eq: C1+C2} can be obtained  as follows
\vspace{-0.1cm}
 \begin{equation} \label{eq: C1}
\begin{split}
 C_1 &= \Bigg(\prod_{i=a}^{b} \lambda_i \Bigg)^2 \sum_{j=a}^{b}  \sum_{n=a}^{b} \frac{ 1}{ \prod_{k=a, k \neq j}^{b} (\lambda_k - \lambda_j)} \frac{1}{(\lambda_j + \lambda_n) \pi \sigma_{\rm w}^2} \\
&~~~ \times \frac{ 1 }{ \prod_{p=a, p \neq j}^{b} (\lambda_p - \lambda_n)}   e^{-\frac{y^2}{\sigma_{\rm w}^2}} \int_{0}^{\infty} e^{-\frac{v^2}{\sigma^2_{\rm{w}}}-(\lambda_j- \frac{2y}{\sigma^2_{\rm{w}}})v} dv    \\
& \stackrel{(g)}{=}  \Bigg( \prod_{i=a}^{b} \lambda_i \Bigg)^2 \sum_{j=a}^{b}  \sum_{n=a}^{b} \frac{ e^{(\frac{\lambda_j \sigma_{\rm{w}}}{2})^2} }{ \prod_{k=a, k \neq j}^{b} (\lambda_k - \lambda_j)} \frac{1}{2(\lambda_j + \lambda_n) } \\
&~~~ \times \frac{ 1 }{ \prod_{p=a, p \neq j}^{b} (\lambda_p - \lambda_n)}    e^{-\lambda_j y} \Bigg( 1-\Phi \Big(\frac{\lambda_j \sigma_{\rm{w}}}{2}-\frac{y}{\sigma_{\rm{w}}}\Big) \Bigg),
\end{split}
\end{equation}
\noindent where (g) comes from \cite{gradshteyn2014table} (page 336, 3.322, formula 2), and $\Phi(x)= {\rm{erf}}(x)= \frac{2}{\sqrt{\pi}}\int_{0}^{x} e^{-t^2} dt$ denotes the Gaussian error function. Analogous to \eqref{eq: C1}, by using \eqref{pdf_v} and after some mathematical simplifications, we obtain
\begin{equation} \label{eq: C2}
\begin{split}
C_2 &=  \Bigg( \prod_{i=a}^{b} \lambda_i \Bigg)^2 \sum_{j=a}^{b}  \sum_{n=a}^{b} \frac{ 1}{ \prod_{k=a, k \neq j}^{b} (\lambda_k - \lambda_j)} \\
&~~~ \times\frac{ 1 }{ \prod_{p=a, p \neq j}^{b} (\lambda_p - \lambda_n)} \frac{1}{2(\lambda_j + \lambda_n) } \\ 
&~~~ \times e^{\Big{(}\frac{\lambda_j \sigma_{\rm{w}}}{2}\Big{)}^2} e^{\lambda_j y} \Bigg(1-\Phi \Big(\frac{\lambda_j \sigma_{\rm{w}}}{2}+\frac{y}{\sigma_{\rm{w}}} \Big) \Bigg).
\end{split}
\end{equation}
Finally, by substituting \eqref{eq: C1} and \eqref{eq: C2} into \eqref{eq: C1+C2}, \eqref{eq: pdf} is derived. 
One can  derive an identical expression for the quadrature component of the received samples by following the same procedure.

\section{}
\label{proof: theo final}

In order to perceive the relation between 
$f_{Y_{n}[m]}(y|{\rm{H}}_0)$ and
$ f_{Y_{n}[m]}(y|{\rm{H}}_d)$, we rely on the following  observations resulting from Theorem \ref{theo: pdf y}.

1) 
For $n  \ge 0$,
there is a repeating pattern in the \ac{pdf} of the received samples due to
 the zero-padded guard interval and Gaussianity of the OFDM samples, i.e., $x_n(mT_{\rm{sa}}) \sim \mathcal{CN}(0,\sigma^2_{\rm{x}})$. Hence, we have  
\begin{align}\label{21230o}
 f_{Y_{(n+q)_{\rm{I}}}[m]}(y_{\rm{I}}|{\rm{H}}_0)= 
f_{Y_{n_{\rm{I}}}[m]} (y_{\rm{I}}|{\rm{H}}_0), \,\,\,\,\ q \ge 0, 
\end{align}
The same equality holds for $f_{Y_{n_{\rm{Q}}}[m]} (y_{\rm{Q}}|{\rm{H}}_0)$. This repetition pattern is shown in Fig. \ref{fig: conv}. 
For simplicity of presentation, we remove the index of the OFDM symbol due to this periodicity and write
\begin{align}\label{8uip}
f_{Y_{n_{\rm{I}}}[m]} (y_{\rm{I}}|{\rm{H}}_0) \triangleq
{f}_{Y_{{\rm{I}}}[m]} (y_{\rm{I}}|{\rm{H}}_0),\,\,\,\,\,\ n  \ge 0.
\end{align}
The same definition holds for $f_{Y_{n_{\rm{Q}}}[m]} (y_{\rm{Q}}|{\rm{H}}_0)$. Following the notation in \eqref{8uip}, we can write 
\begin{align} \label{ioipop}
 f_{Y_n[m]}( y | {\rm{H}}_0 )  
 & \approx f_{Y_{n_{\rm{I}}}[m])} ( y_{\rm{I}} | {\rm{H}}_0 ) f_{Y_{n_{\rm{Q}} }[m]} ( y_{\rm{Q}} | {\rm{H}}_0 ) \\ \nonumber 
 &\triangleq {f}_{Y_{\rm{I}}[m]} ( y_{\rm{I}} | {\rm{H}}_0 ) {f}_{Y_{\rm{Q}}[m]} ( y_{\rm{Q}} | {\rm{H}}_0 ) 
 \triangleq \tilde{f}_{Y[m]}( y| {\rm{H}}_0).
\end{align}

\begin{figure*}
\vspace{-2em}
\centering 
  \includegraphics[scale=0.7]{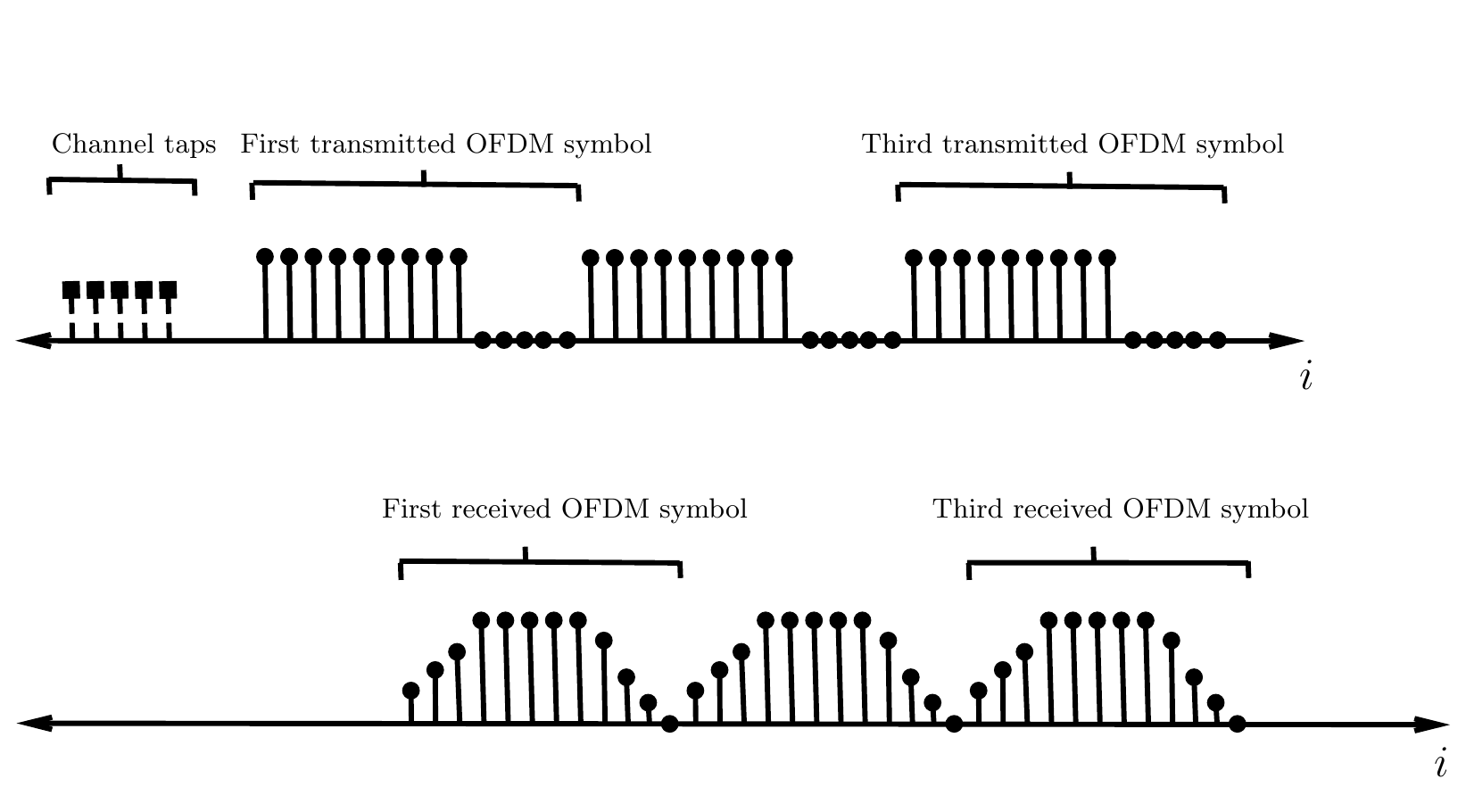}
  \vspace{-1em}
  \caption{Convolution of OFDM symbols with a multi-tap channel results in four region given in \eqref{eq: expa conv 2}.}\label{fig: conv}
\end{figure*}

2) 
For $n<0$, $f_{Y_{n}[m]} (y|{\rm{H}}_0)$ is the \ac{pdf} of the complex Gaussian noise. For simplicity of presentation, we define 
\begin{align}\label{8uipuiui}
\nonumber f_{Y_{n}[m]} (y|{\rm{H}}_0) & \approx f_{Y_{n_{\rm{I}}}[m]}(y_{\rm{I}}|{\rm{H}}_0) f_{Y_{n_{\rm{Q}}}[m]}(y_{\rm{Q}}|{\rm{H}}_0)\\ 
&\triangleq {f}_{Y_{\rm{I}}[-]} \big( y_{\rm{I}} | {\rm{H}}_0 \big) {f}_{Y_{\rm{Q}}[-]} \big( y_{\rm{Q}} | {\rm{H}}_0 \big)\\ \nonumber
&\triangleq \tilde{f}_{Y[-]} (y|{\rm{H}}_0),\,\,\,\,\,\,\,\,\,\,\,\,\,\,\,\,\,\,\,\,\,\,\,\,\,\,\,\,\  n<0.
\end{align}

By using \eqref{ioipop} and \eqref{8uipuiui}, we can define the vector of \ac{pdf}  for the observations vectors $\{\dots, {\bf{y}}_{-2}^{\rm{T}},{\bf{y}}_{-1}^{\rm{T}},{\bf{y}}_0^{\rm{T}},{\bf{y}}_1^{\rm{T}},{\bf{y}}_2^{\rm{T}},\dots\}$ given hypothesis ${\rm{H}}_0$
 as follows

 ${\bf f}_{\bf{Y}}({\bf \cdot}; {\rm{H}}_0) \triangleq$
\begin{equation}\label{eq: pdf matrix H0} 
\hspace{-0.5cm}   \left[\begin{array}{@{}c}
 \vspace{0.2cm}  
 \vdots\\
 f_{Y_{-1}[n_{{\rm{s}}}-2]}\big(\cdot| {\rm{H}}_0 \big) \\ \vspace{0.2cm} f_{Y_{-1}[n_{{\rm{s}}}-1]}\big(\cdot| {\rm{H}}_0 \big)  \\   \hdashline  \\
 \vspace{0.2cm} f_{Y_0[0]}\big(\cdot| {\rm{H}}_0 \big) \\  f_{Y_{0}[1]}\big(\cdot | {\rm{H}}_0\big)  \\ \vdots \\ \vspace{0.2cm} f_{Y_0[n_{\rm{s}}-1]}\big(\cdot | {\rm{H}}_0 \big)  \\   \hdashline  \\
 \vspace{0.2cm} f_{Y_1[0]}\big(\cdot | {\rm{H}}_0 \big) \\  f_{Y_1[1]}\big(\cdot | {\rm{H}}_0 \big)  \\ \vdots \\ \vspace{0.2cm} f_{Y_1[n_{\rm{s}}-1]}\big(\cdot | {\rm{H}}_0 \big)   \\      \hdashline \vspace{-0.3cm} \\
\vdots \vspace{0.3cm}
 \end{array}\right] = 
\hspace{-0.1cm} 
   \left[\begin{array}{@{}c@{}}
  \vdots \\ 
 \tilde{f}_{Y[-]}\big(\cdot| \rm{H}_0 \big) \\ \vspace{0.2cm} \tilde{f}_{Y[-]}\big(\cdot| \rm{H}_0 \big)  \\   \hdashline  \\
 \vspace{0.2cm} \tilde{f}_{Y[0]}\big(\cdot| \rm{H}_0 \big) \\  \tilde{f}_{Y[1]}\big(\cdot | \rm{H}_0\big)  \\ \vdots \\ \vspace{0.2cm} \tilde{f}_{Y[n_{\rm{s}}-1]}\big(\cdot | \rm{H}_0 \big)  \\   \hdashline  \\
 \vspace{0.2cm} \tilde{f}_{Y[0]}\big(\cdot | \rm{H}_0 \big) \\  \tilde{f}_{Y[1]}\big(\cdot | \rm{H}_0 \big)  \\ \vdots \\ \vspace{0.2cm} \tilde{f}_{Y[n_{\rm{s}}-1]}\big(\cdot | \rm{H}_0 \big)   \\      \hdashline \vspace{-0.3cm} \\
\vdots \vspace{0.3cm}
\end{array}
\right].
\end{equation}
As seen in \eqref{eq: pdf matrix H0}, the right-hand side vector simply represents the repetition pattern of the \ac{pdf}. 

The vector of \ac{pdf} for the observation vector
${\bf{y}}$ in \eqref{eq:14} given hypothesis ${\rm{H}}_d$ is expressed as follows
\begin{align}\label{uiuiopl}
{\bf f}_{\bf{Y}}({\bf{y}} | {\rm{H}}_d)={\bf f}^{(d:d+Nn_{\rm{s}}-1)}_{\bf{Y}} ({\bf y}; {\rm{H}}_0 ),
\end{align}
where ${{\bf f}^{(d:d+Nn_{\rm{s}}-1)}_{\bf{Y}}} (\cdot ; {\rm{H}_0})$ is defined in \eqref{6yo09}. 
By using
\eqref{eq:29} and \eqref{eq:30},   we can write ${\bf f}_{\bf{Y}}({\bf{y}} | {\rm{H}}_d)$ in \eqref{uiuiopl} based on $\tilde{f}_{Y[-]} (\cdot|{\rm{H}}_0)$ and $\tilde{f}_{Y[m]}( \cdot| {\rm{H}}_0)$, $m=0,1,\dots,n_{\rm{s}}-1$, as shown in \eqref{dpos} and \eqref{dpos1}. 
It is worth mentioning that due to the repetition pattern in ${\bf f}_Y({\bf \cdot}; {\rm{H}}_0)$, the truncated \ac{pdf} vector in \eqref{uiuiopl} contains $N-1$ full blocks of $\big{[}\tilde{f}_{Y[0]}(\cdot | {\rm{H}}_0 )
, \tilde{f}_{Y[1]}(\cdot | {\rm{H}}_0 ), \ \dots, \ 
\tilde{f}_{Y[n_{\rm{s}}-1]}(\cdot | {\rm{H}}_0 )\big{]}^T$.

\end{document}